\newtheorem{theorem}{Theorem}[section]
\newtheorem{proposition}[theorem]{Proposition}
\newtheorem{lemma}[theorem]{Lemma}
\newtheorem{remark}[theorem]{Remark}
\numberwithin{equation}{section}
\DeclarePairedDelimiter{\abs}{\lvert}{\rvert}
\DeclarePairedDelimiter{\norma}{\lVert}{\rVert}
\DeclarePairedDelimiter{\set}{\{}{\}}
\DeclarePairedDelimiter{\angbra}{\langle}{\rangle}
\DeclarePairedDelimiter{\parens}{(}{)}
\DeclarePairedDelimiter{\sqparens}{[}{]}
\newcommand{\Pleft}{\mathrm{P^{dlb}}}
\newcommand{\Pright}{\mathrm{P^{urf}}}
\newcommand{\PplusBC}{\mathbb{P}_{\Lambda}^{+}}
\newcommand{\PminusBC}{\mathbb{P}_{\Lambda}^{-}}
\newcommand{\Pshaken}{\mathrm{P^{sh}}}
\newcommand{\Palt}{\mathrm{P^{alt}}}
\newcommand{\hleft}{h_{x}^{\mathrm{dlb}}}
\newcommand{\hright}{h_{x}^{\mathrm{urf}}}
\newcommand{\altBox}{\Lambda_\mathcal{D}}
\newcommand{\statespace}{\ensuremath{\mathcal{X}}}
\newcommand{\altStatespace}{\ensuremath{\mathcal{X}_\mathcal{D}}}
\begin{document}
\title{Shaken dynamics on the 3-d cubic lattice}

\author{Benedetto Scoppola}
\address{Benedetto Scoppola, Dipartimento di Matematica
        Università di Roma ``Tor Vergata''    
}

\author{Alessio Troiani}
\address{Alessio Troiani, Dipartimento di Matematica ``Tullio Levi-Civita''
        Università di Padova and Dipartimento di Matematica Università di Roma ``La Sapienza''
}

\author{Matteo Veglianti}
\address{Matteo Veglianti, Dipartimento di Fisica
        Università di Roma ``Tor Vergata''    
}

\begin{abstract}
On the space of $\pm 1$ spin configurations on the 3$d$-square lattice, we consider  the \emph{shaken dynamics}, a parallel Markovian dynamics that can be interpreted in terms of Probabilistic Cellular Automata. The transition probabilities are defined in terms of pair ferromagnetic Ising-type Hamiltonians with nearest neighbor interaction $J$, depending on an additional parameter $q$, measuring the tendency of the system to remain locally in the same state. Odd times and even times
have different transition probabilities. We compute the stationary measure of the shaken dynamics and we investigate its relation with the Gibbs measure for the 3$d$ Ising model. It turns out that the two parameters $J$ and $q$ tune the geometry of the underlying lattice. 
We conjecture the existence of unique line of critical points in $J-q$ plane.
By a judicious use of perturbative methods we delimit the region where such curve must lie and we perform numerical simulation to determine it. Our method allows us to find in a unified way the critical values of $J$ for Ising model with first neighbors interaction, defined on a whole class of lattices, intermediate between the two-dimensional hexagonal and the three-dimensional cubic one, such as, for example, the tetrahedral lattice. Finally we estimate the critical exponents of the magnetic susceptibility and show that our model captures a dimensional transition in the geometry of the system at $q = 0$.
\end{abstract}

\maketitle

\section{Introduction}
Probabilistic Cellular Automata (PCA) are discrete-time Markov chains on a product space $S^{\Lambda}$ (configuration space) whose transition probability is a product measure, i.e. given two generic configurations $\tau=(\tau_1,\dots,\tau_N)$ and $\sigma=(\sigma_1,\dots,\sigma_N)$:
\begin{equation}\label{eq:general_pca_transition_probability}
    P\set*{X_n=\tau|X_{n-1}=\sigma}
    = \prod_{i=1}^N P
        \set*{(X_N)_i=\tau_i|X_{N-1}=\sigma},
\end{equation}
so that for each time $n$, the components of the configuration are independently updated. From a computational point of view, the evolution of a Markov chain of this type is well suited to be simulated on parallel processors.

Recently, a class of PCA has been introduced in order to study nearest neighbors spin systems on lattices and, more generally, spin systems on arbitrary graphs $G=(V,E)$, where the interaction Hamiltonian is given
by
\begin{equation}
    H(\sigma)=-\sum_{e=\{x, y\} \in E} 
                J_{x y} \sigma_{x} \sigma_{y} - 
              2\sum_{x \in V} 
                \lambda_{x} \sigma_{x}
\end{equation}
with both $J_{x y}$ and $\lambda_{x}$ in $\mathbb{R}$, 
and $\sigma \in\{-1,+1\}^{V}$ a \emph{configuration} on $G$. 
In this context, the transitions probability from a configuration
$\sigma$ to a configuration $\tau$ is defined
in terms of a \emph{pair Hamiltonian}  $H(\sigma, \tau)$
and these transitions are such that, 
at each time step, the value of all spins is
simultaneously updated (see: \cite{dss12, ls, dss15, pss, pssboundary}).
In this framework, a new 
parallel dynamics for  Ising-like models on general finite graph
called \emph{shaken dynamics} has been introduced in \cite{shaken2d} 
and has been extensively investigated in the case of the
two dimensional square lattice. 
The distinctive feature of the shaken dynamics is the fact that transitions
between states are obtained through a combination of two 
\emph{half steps}. 
In each of these half steps the value of the spin at site $x$
is updated according to a probability distribution depending, 
through a self interaction parameter $q > 0$,
on the value of the spin at site $x$ itself and the values of the spins 
sitting at a suitable subset of the sites adjacent to $x$ in such a way that
all neighbors of $x$ are considered exactly once in the \emph{whole step}.
It is worth noting that
a shaken dynamics on a given graph structure 
can be naturally associated to a dynamics on an \emph{induced} bipartite graph
where the spins in each partition are alternatively updated.
The vertex set of this bipartite graph consists of two copies 
of the original vertex set so that this induced graph can be thought
as to have two \emph{layers}
(see Fig~\ref{fig:layers_structure_induced_graph}). 
\begin{figure}
    \centering
    \includegraphics[width=0.65\textwidth]{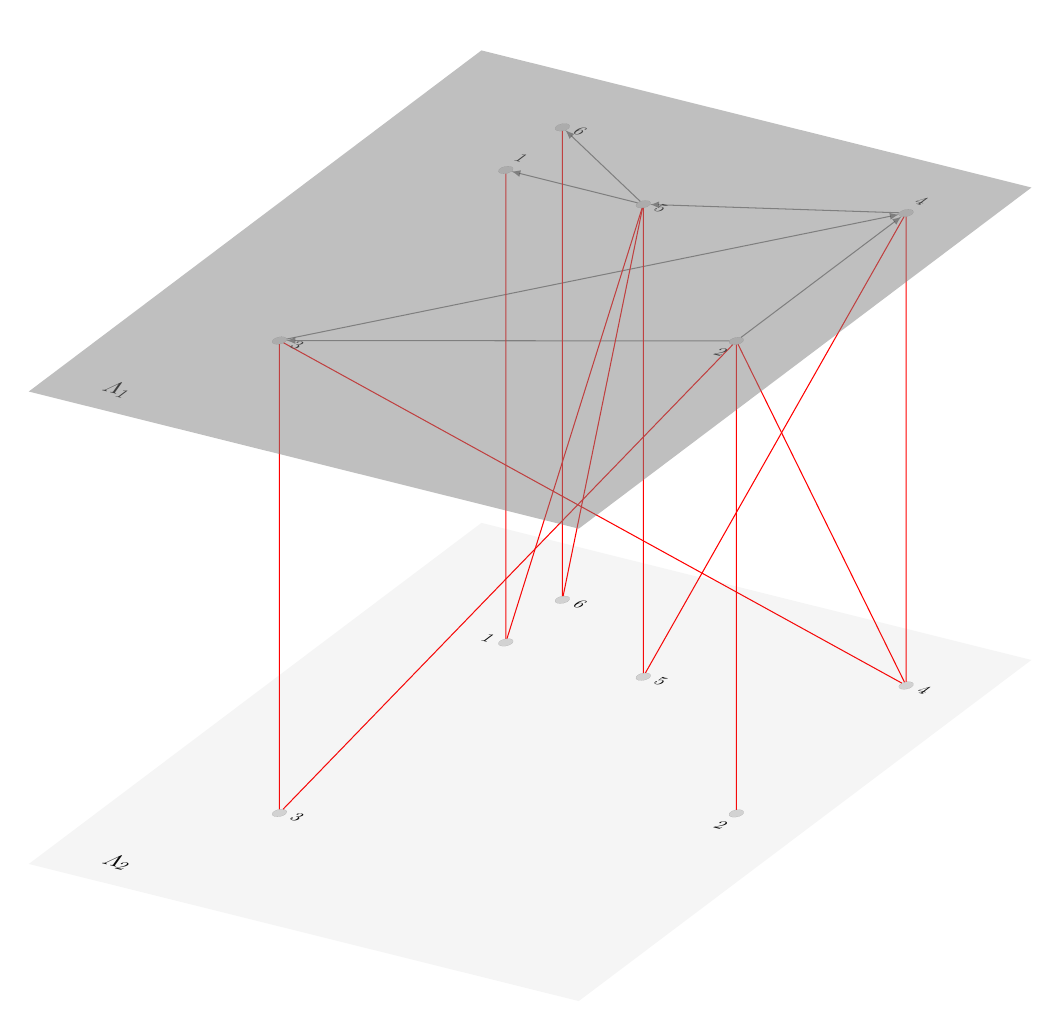}
    \caption{The layers structure of the induced bipartite graph (red edges) associated to the original graph (gray edges). On the induced graph the vertical edges, corresponding to the self interaction, have weight $q$ whereas the slanted edges have weight $J$.}
    \label{fig:layers_structure_induced_graph}
\end{figure}

The sub-configuration on one of the layers  
is the ``current'' configuration of the shaken dynamics whereas the 
sub-configuration on the other layer is the ``intermediate'' configuration 
reached through the first half step. In this view, the shaken dynamics
can be seen as the evolution taking place on one of the layers of the
associated alternate dynamics.
The geometry of the induced bipartite graph where the alternate dynamics
lives varies continuously with $q$. 
For instance, in the case of the shaken dynamics on $\mathbb{Z}^2$ with $J_{xy} = J$ for all pairs of nearest neighbors $\set*{x,y}$, the bipartite graph where the associated alternate dynamics evolves is a non homogeneous hexagonal lattice. For $q = J$ this graph becomes the homogeneous hexagonal lattice;
in the limit $q \to \infty$ the hexagonal lattice ``collapses'' onto 
the square lattice whereas for $q = 0$ the hexagonal lattice becomes a 
collection of independent one dimensional lattices.

In this work we study the shaken dynamics on the $3d$ cubic lattice
with $J_{x,y} = J$ for all $\set*{x,y}$ that are nearest neighbors.
We determine the stationary measure of the shaken dynamics 
and show that if the self interaction $q$
is sufficiently large, then this equilibrium measure 
tends to the Gibbs measure in total variation distance.
We argue that the associated alternate dynamics takes place on a suitable
tetrahedral lattice that becomes homogeneous if $J = q$, becomes the
cubic lattice in the limit $q\to\infty$ and reduces to a collection 
of independent $2d$ hexagonal lattices if $q=0$.

It is reasonable to assume that, in the case of null external magnetic field,
there is a critical curve $J_c(q)$ in the $J-q$ plane which separates
the ordered phase from the disordered one. To gain some information
on $J_c(q)$,
we determine two curves in the $J-q$ plane such that 
above the ``upper curve'' the system is in an ordered phase (low temperature regime), whereas below the ``lower curve'' the system is
in a disordered phase (high temperature regime). The 
critical curve must lie in the region delimited by
these two curves.

Further we provide a numerical estimate for $J_c(q)$.
We see that our estimates for
$J_c(0)$, $J_c(J)$ and $J_c(\infty)$ are, respectively,
in good agreement with the critical temperature of the
Ising model on the hexagonal lattice and the numerical
estimates available for the critical temperature of the Ising model
on the tetrahedral and the cubic lattice. This suggests that the 
numerically determined critical curve should be not too far apart
from the ``real'' one.
Moreover we study numerically the critical exponents for the magnetic susceptibility as a function of the self interaction $q$ and provide 
some evidence that the system retains its three dimensional structure
as long as $q > 0$ whereas it becomes two dimensional when $q = 0$.
In other words our model is able to capture the dimensional transition at $q=0$.

In the next section we define the lattice spin model and the shaken dynamics, we describe the first properties of the model, and highlight its relation with the alternate dynamics on the tetrahedral lattice. 
Further we state our main
results. Section 3 is devoted to the proofs. Finally, in section 4 we present our numerical findings concerning the critical curve and discuss the behavior of the critical exponents.

\section{Model description and main results}

\subsection{Definitions and first properties}

Let $\Lambda$ be a $L \times L \times L$ square box in $ \mathbb{Z}^3$ and let $\statespace$ be the set of all possible spin configurations, i.e.
$\statespace=\{\sigma : \sigma=\{-1,1\}^{|\Lambda|}\}$.

We call $B_\Lambda$ be the set of all pairs of nearest neighbors when
periodic boundary conditions are imposed on $\Lambda$.

Let $\sigma, \tau \in \statespace$ be two spin configurations and
define a pair Hamiltonian in the following manner:
\begin{align}
\label{equation1}
H_{\lambda}(\sigma,\tau)  
    & = -\sum_{x\in\Lambda}
            [J\sigma_x(\tau_x^u+\tau_x^r+\tau_x^f)+q\sigma_x\tau_x+\lambda(\sigma_x+\tau_x)]\\
\label{equation2}
    & =-\sum_{x\in\Lambda}         
            [J\tau_x(\sigma_x^d+\sigma_x^l+\sigma_x^b)+q\tau_x\sigma_x+\lambda(\tau_x+\sigma_x)],
\end{align}

where
\begin{itemize}[label=\adfbullet{43}]
  \item $J>0$ represents the ferromagnetic interaction constant,
  \item $q>0$ represents the inertial (or self-interaction) term,
  \item $\lambda>0$ represents the intensity of external magnetic field,
  \item $x^u$ is the site above $x$ 
        at lattice distance $1$ from $x$ itself,
  \item $x^r$ is the site on the right of $x$ 
  \item $x^f$ is the site in front of $x$
  \item $x^d$ is the site below (down) $x$ 
  \item $x^l$ is the site on the left of $x$
  \item $x^b$ is the site behind $x$
  \item $\sigma_x$ (resp. $\tau_x$) 
        is the spin at site $x$ 
        in configuration $\sigma$ (resp. $\tau$)
  \item $\sigma_x^d$ is the spin at site $x^d$
        in configuration $\sigma$ ($\sigma_x^l$, 
        $\sigma_x^b$, $\tau_x^u$, \ldots
        are defined likewise)
\end{itemize}
See Fig.~\ref{cubic}. 

\begin{figure}
\centering
\includegraphics[scale=0.5]{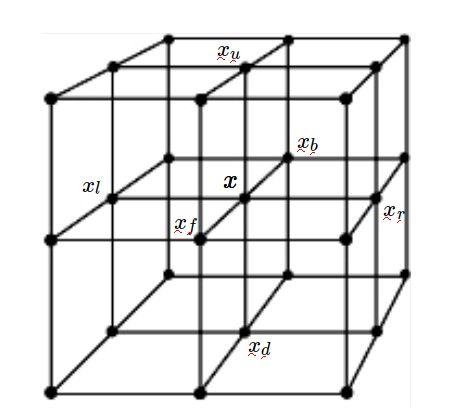}
\caption{cubic lattice.}
\label{cubic}
\end{figure}

It is straightforward to check that
Hamiltonian \eqref{equation1} is linked tightly
to the standard Ising one on the standard cubic lattice. 
In particular the 
following proposition holds.
\begin{proposition}
\label{proposition1}
\begin{equation}
\label{equation3}
H_{\lambda}(\sigma,\sigma)=H_{2\lambda}(\sigma)-q|\Lambda|,
\end{equation}\\
with
\begin{equation}
\label{equation4}
H_{2\lambda}(\sigma)=-\sum_{(x,y)\in B_{\Lambda}} J\sigma_x\sigma_y-2\lambda\sum_{x\in\Lambda} \sigma_x,
\end{equation}
the standard Ising Hamiltonian with external magnetic field twice that of Hamiltonian (\ref{equation1}).
\end{proposition}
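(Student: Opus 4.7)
The statement is essentially a bookkeeping identity, so my plan is to substitute $\tau=\sigma$ directly in \eqref{equation1} and then match the three resulting pieces (ferromagnetic coupling, self-interaction, external field) against the three terms appearing on the right-hand side of \eqref{equation3}.

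Setting $\tau_x\equiv\sigma_x$, the coupling piece becomes $-\sum_{x\in\Lambda} J\sigma_x(\sigma_x^u+\sigma_x^r+\sigma_x^f)$. The key observation is that the triple of neighbours $\{x^u,x^r,x^f\}$ furnishes a consistent orientation of the bonds of $\Z^3$: as $x$ ranges over $\Lambda$, each unordered pair $\{x,y\}\in B_\Lambda$ is produced exactly once (the bond $\{x,x^u\}$, for instance, is contributed only by its lower endpoint). Hence this sum equals $-\sum_{(x,y)\in B_\Lambda}J\sigma_x\sigma_y$, which is the two-body piece of \eqref{equation4}. The self-interaction term gives $-\sum_{x\in\Lambda} q\sigma_x^2=-q|\Lambda|$, since $\sigma_x\in\{-1,+1\}$ implies $\sigma_x^2=1$; this is precisely the additive constant in \eqref{equation3}. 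Finally, the field piece collapses to $-\sum_{x\in\Lambda}\lambda(\sigma_x+\sigma_x)=-2\lambda\sum_{x\in\Lambda}\sigma_x$, matching the magnetic term of $H_{2\lambda}$. Assembling the three identifications yields \eqref{equation3}.

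There is no genuine obstacle here: the argument is purely algebraic once the orientation convention is fixed. The only subtlety worth flagging is the edge-count used in the first step, which relies on a choice of boundary convention for $\Lambda$ (periodic in each of the three directions being the natural one). The equivalence of \eqref{equation1} and \eqref{equation2} already presupposes such a convention, since under it the up/right/front prescription applied to $\tau$ sweeps the same collection of bonds as the down/left/back prescription applied to $\sigma$; I would therefore state this convention explicitly once, after which the identification is mechanical.
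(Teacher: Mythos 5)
Your proof is correct and follows the same route as the paper, which simply sets $\sigma=\tau$ in \eqref{equation1} and notes that the identity is immediate; you have merely written out the three-term bookkeeping (bond orientation, $\sigma_x^2=1$, doubled field) that the paper leaves implicit. The remark about the boundary convention is a reasonable clarification but does not change the substance of the argument.
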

\begin{proof}

Follows immediately from (\ref{equation1}) by setting $\sigma=\tau$, see \cite{shaken2d}.
\end{proof}

In the same spirit of \cite{shaken2d} we want to define a \emph{shaken dynamics}
on $\statespace$. 
To this end, consider a Markov Chain that updates
the spin configuration with transitions probability $\Pleft$ at odd
times and $\Pright$ at even times where:

\begin{equation}
\label{equation7}
\Pleft(\sigma,\sigma')=\frac{e^{-H_{\lambda}(\sigma,\sigma')}}{\overrightarrow{\rm Z_\sigma}} \ \ \ \text{and} \ \ \ \Pright(\sigma,\sigma')=\frac{e^{-H_{\lambda}(\sigma',\sigma)}}{\overleftarrow{\rm Z_\sigma}},
\end{equation}

with 
$\overrightarrow{\rm Z_\sigma}=\sum_{\sigma'\in\chi} e^{-H_{\lambda}(\sigma,\sigma')} \ \ \ \text{and} \ \ \ \overleftarrow{\rm Z_\sigma}=\sum_{\sigma'\in\chi} e^{-H_{\lambda}(\sigma',\sigma)}$
normalizing constants.

Then, the shaken dynamics is defined through the composition of
an ``odd'' and an ``even'' step. More precisely:
\begin{align}
\label{equation13}
\Pshaken(\sigma,\tau)=\sum_{\sigma'\in\chi} \Pleft(\sigma,\sigma')\Pright(\sigma',\tau)=\sum_{\sigma'\in\chi} \frac{e^{-H_{\lambda}(\sigma,\sigma')}}{\overrightarrow{\rm Z_\sigma}}\frac{e^{-H_{\lambda}(\tau,\sigma')}}{\overleftarrow{\rm Z_\sigma'}}.
\end{align}

Though, strictly speaking, the shaken dynamics \eqref{equation13} is
not a PCA in the sense of \eqref{eq:general_pca_transition_probability},
it is the composition of two steps each having a 
factorized transition probability. Indeed:

\begin{align}
\Pleft(\sigma,\sigma')=
    \prod_{x\in\Lambda}\frac{e^{\hleft(\sigma)\sigma_x'}}{2\cosh \hleft(\sigma)}, \quad
\Pright(\sigma,\sigma')=
    \prod_{x\in\Lambda}\frac{e^{\hright(\sigma)\sigma_x'}}{2\cosh \hright(\sigma)}
\end{align}
where
\begin{align}
    \hleft(\sigma)=
        J(\sigma_x^d+\sigma_x^l+\sigma_x^b)+q\sigma_x-\lambda,
    \quad    
    \hright(\sigma)=
        J(\sigma_x^u+\sigma_x^r+\sigma_x^f)+q\sigma_x+\lambda
\end{align}
are the local fields felt at site $x$ at, respectively, the odd and the
even ``half steps''.

Observe that the Hamiltonian \eqref{equation1} is not symmetric:
$
    H_{\lambda}(\sigma,\tau)\neq H_{\lambda}(\tau,\sigma).
$
This implies that a dynamics evolving solely according to
$\Pleft$ or $\Pright$ is not reversible. However, when the shaken 
dynamics \eqref{equation13} is considered, then the following result holds:

\begin{proposition}\label{proposition4}
The shaken dynamics $\Pshaken(\sigma,\tau)$ is reversible with respect to the measure
$
\pi_\Lambda(\sigma)=\frac{\overrightarrow{\rm Z_\sigma}}{Z}
$,
with $Z$ a normalizing constant.
\end{proposition}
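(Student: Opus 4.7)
The plan is to verify the detailed balance equation
\[
\pi_\Lambda(\sigma)\Pshaken(\sigma,\tau) = \pi_\Lambda(\tau)\Pshaken(\tau,\sigma)
\]
by a direct computation that exploits a cancellation built into the definition of $\pi_\Lambda$. Substituting the definitions of $\pi_\Lambda$, $\Pshaken$, $\Pleft$ and $\Pright$ gives
\[
\pi_\Lambda(\sigma)\Pshaken(\sigma,\tau)
= \frac{\overrightarrow{\rm Z_\sigma}}{Z}\sum_{\sigma'\in\statespace}
\frac{e^{-H_\lambda(\sigma,\sigma')}}{\overrightarrow{\rm Z_\sigma}}\cdot
\frac{e^{-H_\lambda(\tau,\sigma')}}{\overleftarrow{\rm Z_{\sigma'}}},
\]
in which the factor $\overrightarrow{\rm Z_\sigma}$ at the front cancels exactly the same factor appearing in the denominator of $\Pleft(\sigma,\sigma')$. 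What remains,
\[
\pi_\Lambda(\sigma)\Pshaken(\sigma,\tau)
= \frac{1}{Z}\sum_{\sigma'\in\statespace}
\frac{e^{-H_\lambda(\sigma,\sigma')-H_\lambda(\tau,\sigma')}}{\overleftarrow{\rm Z_{\sigma'}}},
\]
is manifestly invariant under the exchange $\sigma \leftrightarrow \tau$: the sum runs over the intermediate configuration $\sigma'$, the two Hamiltonian terms in the exponent play symmetric roles, and the remaining normalizer $\overleftarrow{\rm Z_{\sigma'}}$ depends only on $\sigma'$. This yields detailed balance, and stationarity of $\pi_\Lambda$ follows immediately.

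It remains only to observe that $\pi_\Lambda$ is genuinely a probability measure. Since $\statespace$ is finite and each $\overrightarrow{\rm Z_\sigma}$ is a positive finite quantity, the normalizer $Z := \sum_{\sigma\in\statespace}\overrightarrow{\rm Z_\sigma}$ is well-defined and strictly positive, and by construction $\pi_\Lambda$ sums to one.

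There is no serious obstacle in this proof, and the argument is the direct three-dimensional analogue of the one used in \cite{shaken2d}: the computation is purely formal in the pair Hamiltonian $H_\lambda(\cdot,\cdot)$ and is insensitive to the underlying lattice geometry. The point worth emphasizing is the role of the asymmetry $H_\lambda(\sigma,\tau)\neq H_\lambda(\tau,\sigma)$ noted just before the statement: it is precisely this asymmetry that makes $\Pleft$ and $\Pright$ individually non-reversible, but it is cured by the composition, because the odd normalizer $\overrightarrow{\rm Z_\sigma}$ is absorbed into the definition of $\pi_\Lambda$ while the even normalizer $\overleftarrow{\rm Z_{\sigma'}}$ survives only as a function of the intermediate variable and therefore plays a symmetric role in the balance equation.
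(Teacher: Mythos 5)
Your proof is correct and follows essentially the same route as the paper's: cancel $\overrightarrow{\rm Z_\sigma}$ against the normalizer of $\Pleft(\sigma,\cdot)$ and observe that the resulting sum over the intermediate configuration $\sigma'$ is symmetric under $\sigma\leftrightarrow\tau$, which is exactly the paper's one-line verification of $\overrightarrow{\rm Z_\sigma}\,\Pshaken(\sigma,\tau)=\overrightarrow{\rm Z_\tau}\,\Pshaken(\tau,\sigma)$. Your added remarks on normalizability and on how the composition cures the non-reversibility of each half step are accurate but not needed beyond what the paper records.
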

\begin{proof}
The detailed balance condition is readily established, indeed:
\begin{align}
\begin{aligned}
\overrightarrow{\rm Z_\sigma}\Pshaken(\sigma,\tau)
    & = \overrightarrow{\rm Z_\sigma}\sum_{\sigma'\in\chi} \frac{e^{-H_{\lambda}(\sigma,\sigma')}}{\overrightarrow{\rm Z_\sigma}}\frac{e^{-H_{\lambda}(\tau,\sigma')}}{\overleftarrow{\rm Z_\sigma'}}
    = \sum_{\sigma'\in\chi} \frac{e^{-[H_{\lambda}(\sigma,\sigma')+H_{\lambda}(\tau,\sigma')]}}{\overleftarrow{\rm Z_\sigma'}} \\
    & = \overrightarrow{\rm Z_\tau}\sum_{\sigma'\in\chi} \frac{e^{-H_{\lambda}(\tau,\sigma')}}{\overrightarrow{\rm Z_\tau}}\frac{e^{-H_{\lambda}(\sigma,\sigma')}}{\overleftarrow{\rm Z_\sigma'}}
    = \overrightarrow{\rm Z_\tau}\Pshaken(\tau,\sigma)
\end{aligned}
\end{align}
\end{proof}

\begin{figure}
\centering
\includegraphics[scale=0.6]{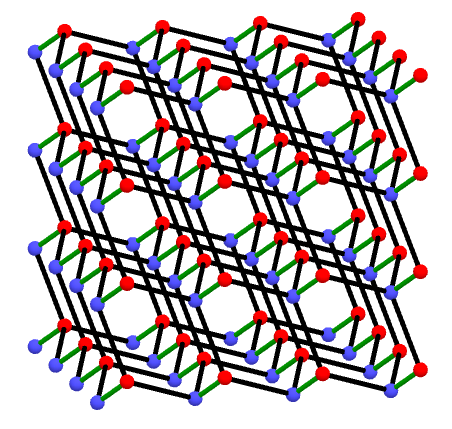}
\caption{The graph $\altBox$.
Blue and red dots represent respectively the sites in the two three-dimensional lattices $V_1$ and $V_2$. The black segments correspond to the interaction governed by the parameter $ J $, the green segments correspond to the self-interaction, governed by the parameter q.}
\label{reticolo1}
\end{figure}

\subsection{Alternate dynamics}
Let $V_1$ and $V_2$ be two copies of $V$ the vertex set of $\Lambda$
and let $\altBox$ be a finite graph with vertex set given by
$V_1 \cup V_2$. 
For each site $x$ in $\Lambda$, $x_1, x_2$ are the two copies of
$x$ in, respectively, $V_1$ and $V_2$ and are called corresponding sites.
Consider a spin configuration $\sigma_1$ on $V_1$ and a spin
configuration $\sigma_2$ on $V_2$. Then, the Hamiltonian
$H(\sigma_1, \sigma_2)$ defines the set of edges on $\altBox$.
In particular each pair $x_1, x_2$ of corresponding sites is connected
by an edge with weight $q$. Moreover, each $x_1 \in V_1$ has there
additional edges, $\{x_1, x_2^u\}$, $\{x_1, x_2^r\}$, $\{x_1, x_2^f\}$,
with weight $J$. 

Note that the graph $\altBox$ is bipartite by construction and each edge
has one endpoint in $V_1$ and one in $V_2$. 
A graphical representation of $\altBox$ is given in Fig.~\ref{reticolo1}.

The parameter $q$ determines the geometry of the lattice $\altBox$. Indeed,
thinking to the edge weight as to be proportional to the inverse of the
geometrical distance between the vertices we have:
\begin{itemize}[label=\adfbullet{43}]
\item the limit $q\to 0$ correspond to erasing the $q-$edges obtaining, from the lattice $\altBox$, ``independent'' copies of the two-dimensional honeycomb lattice;
\item when $J=q$, the $q-$edges and the $J-$edges become of the same length, so the lattice $\altBox$ becomes a tetrahedral lattice, that we can imagine like a diamond structure;
\item the limit $q\to\infty$ correspond to identify the two vertices linked by the $q-$edge, in this case the lattice $\altBox$ degenerates into a simple cubic lattice.
\end{itemize}

Consider a dynamics (in the remainder referred to as
\emph{alternate dynamics}) that, alternatively, at each step updates 
all spins in one of the two layers. Then the shaken dynamics can be
seen as the projection of the alternate dynamics onto one of the layers.

To make this statement precise, let
$\vec{\sigma} = (\sigma_1,\sigma_2), \; 
\vec{\tau} = (\tau_1, \tau_2) \in \altStatespace$. 
Then, the alternate dynamics on $\altStatespace$ is defined by
the transition probabilities
\begin{equation}
\label{equation18}
    \Palt(\vec{\sigma},\vec{\tau})
        = \Pleft(\sigma_1,\tau_2)\Pright(\tau_2,\tau_1)
        = \frac{e^{-H_{\lambda}(\sigma_1,\tau_2)}}
               {\overrightarrow{\rm Z_ {\sigma_1}}}
          \frac{e^{-H_{\lambda}(\tau_1,\tau_2)}}
               {\overleftarrow{\rm Z_{\tau_2}}}.
\end{equation}
and the transition probabilities of the shaken dynamics can be
written in the form
\begin{equation}
\label{equation17}
\Pshaken(\sigma_1,\tau_1)
    =\sum_{\tau_2\in\statespace} \Palt ((\sigma_1,\cdot), (\tau_1,\tau_2)),
\end{equation}
Let $ Z_{\altBox} = \sum_{\sigma, \tau} e^{-H(\sigma, \tau)}$.
As far as the stationary measure of the alternate dynamics is concerned
we have the following:
\begin{proposition}
The alternate dynamics defined on $\altStatespace$ with transition probability $\Palt(\vec{\sigma},\vec{\tau})$ has the following stationary measure:
\begin{equation}
    \pi_2(\sigma,\tau)=\frac{1}{Z_{\altBox}} e^{-H_{\lambda}(\sigma,\tau)}.
\end{equation}
Moreover, in general, this dynamics is irreversible.
\end{proposition}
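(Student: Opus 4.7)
The stationarity claim is a direct computation. The plan is to verify
\[
\sum_{\vec{\sigma}\in\altStatespace} \pi_2(\vec{\sigma})\,\Palt(\vec{\sigma},\vec{\tau}) \;=\; \pi_2(\vec{\tau})
\]
by substituting the definitions and exploiting the fact that $\Palt(\vec{\sigma},\vec{\tau})$ depends on $\vec{\sigma}$ only through $\sigma_1$ (not $\sigma_2$). Concretely I would split the double sum into an inner sum over $\sigma_2$ and an outer sum over $\sigma_1$. The inner sum is
\[
\sum_{\sigma_2\in\statespace} e^{-H_{\lambda}(\sigma_1,\sigma_2)} \;=\; \overrightarrow{\rm Z_{\sigma_1}},
\]
which cancels exactly the $\overrightarrow{\rm Z_{\sigma_1}}$ appearing in the denominator of the first factor of $\Palt$. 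What remains factors as $\frac{e^{-H_{\lambda}(\tau_1,\tau_2)}}{Z\,\overleftarrow{\rm Z_{\tau_2}}}\sum_{\sigma_1}e^{-H_{\lambda}(\sigma_1,\tau_2)}$, and by definition this last sum equals $\overleftarrow{\rm Z_{\tau_2}}$, killing the remaining denominator. One is left with $e^{-H_{\lambda}(\tau_1,\tau_2)}/Z=\pi_2(\vec{\tau})$. No estimates or approximations are needed; it is a telescoping of normalizing constants.

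For the irreversibility statement, the natural strategy is to show that detailed balance fails. The key structural observation is the same one used above: $\Palt(\vec{\sigma},\vec{\tau})$ is independent of $\sigma_2$, whereas $\Palt(\vec{\tau},\vec{\sigma})$ depends on $\sigma_2$ through the factor $\Pleft(\tau_1,\sigma_2)$. Pick any two states $\vec{\sigma}=(\sigma_1,\sigma_2)$ and $\vec{\sigma}'=(\sigma_1,\sigma_2')$ with $\sigma_2\neq \sigma_2'$ and a state $\vec{\tau}$ with $\Palt(\vec{\sigma},\vec{\tau})>0$. If detailed balance held for both pairs, then subtracting the two identities
\[
\pi_2(\vec{\sigma})\,\Palt(\vec{\sigma},\vec{\tau})=\pi_2(\vec{\tau})\,\Palt(\vec{\tau},\vec{\sigma}),\qquad
\pi_2(\vec{\sigma}')\,\Palt(\vec{\sigma}',\vec{\tau})=\pi_2(\vec{\tau})\,\Palt(\vec{\tau},\vec{\sigma}')
\]
and using $\Palt(\vec{\sigma},\vec{\tau})=\Palt(\vec{\sigma}',\vec{\tau})$ would force a rigid algebraic identity relating $H_{\lambda}(\sigma_1,\sigma_2)$, $H_{\lambda}(\sigma_1,\sigma_2')$, $H_{\lambda}(\tau_1,\sigma_2)$, $H_{\lambda}(\tau_1,\sigma_2')$ and the two normalizing constants $\overleftarrow{\rm Z_{\sigma_2}}$, $\overleftarrow{\rm Z_{\sigma_2'}}$. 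I would then exhibit a concrete small example (e.g.\ choosing $\sigma_1$, $\tau_1$, $\sigma_2$, $\sigma_2'$ differing only at one or two sites, on a small box $\Lambda$) where the identity visibly fails for generic $J,q,\lambda$.

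The routine part is the first paragraph; the only delicate point is the second, since ``irreversible in general'' needs some care: one has to rule out accidental reversibility at special parameter values. I expect the main obstacle to be packaging the counterexample cleanly, and I would use the Kolmogorov cycle criterion on a three-state cycle as a backup if a one-shot failure of detailed balance is awkward to write down explicitly. In that case, one chooses three states forming an oriented cycle, computes the product of forward and backward transition probabilities along it, and verifies the ratio is not $1$ for generic $J,q,\lambda$.
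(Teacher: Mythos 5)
Your stationarity argument is exactly the paper's: sum over $\sigma_2$ to produce and cancel $\overrightarrow{\rm Z_{\sigma_1}}$, then sum over $\sigma_1$ to produce and cancel $\overleftarrow{\rm Z_{\tau_2}}$, leaving $e^{-H_{\lambda}(\tau_1,\tau_2)}/Z$; the paper writes this telescoping in one compressed line, you spell it out. For irreversibility you go further than the paper, which merely asserts that detailed balance fails ``since $H_{\lambda}(\sigma,\tau)\neq H_{\lambda}(\tau,\sigma)$'' --- a justification that is really only a heuristic, since asymmetry of the pair Hamiltonian does not by itself preclude detailed balance of the composed kernel. Your observation that $\Palt(\vec{\sigma},\vec{\tau})$ is independent of $\sigma_2$ while $\Palt(\vec{\tau},\vec{\sigma})$ is not is the right structural point: dividing the two putative detailed-balance identities for $(\sigma_1,\sigma_2)$ and $(\sigma_1,\sigma_2')$ forces $\Pright(\sigma_2,\cdot)=\Pright(\sigma_2',\cdot)$ as distributions, which visibly fails for generic $J,q,\lambda$ (they are product measures with different local fields). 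So your proposal is correct and, on the irreversibility half, actually tighter than the paper's; the only thing left to do to make it fully self-contained is to name one concrete pair $\sigma_2\neq\sigma_2'$ (e.g.\ the all-plus and all-minus configurations) for which the two product kernels differ.
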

\begin{proof}
\begin{align}
\begin{aligned}
    \sum_{\sigma_1,\sigma_2}
        \pi_2 (\sigma_1,\sigma_2) \Palt(\vec{\sigma},\vec{\tau}) 
    & = \sum_{\sigma_1,\sigma_2} 
        \frac{1}
             {Z} 
        e^{-H(\sigma_1,\sigma_2)}
        \frac{e^{-H_{\lambda}(\sigma_1,\tau_2)}}
             {\overrightarrow{\rm Z_{\sigma_1}}}
        \frac{e^{-H_{\lambda}(\tau_1,\tau_2)}}
             {\overleftarrow{\rm Z_{\tau_2}}} \\
    & = \sum_{\tau_1,\tau_2} \frac{1}{Z} e^{-H_{\lambda}(\sigma_1,\sigma_2)}
    = \pi_2 (\tau_1,\tau_2).
\end{aligned}
\end{align}
However, in general,
\begin{equation}
    \pi_2 (\sigma_1,\sigma_2) \Palt(\vec{\sigma},\vec{\tau}) 
    \neq 
    \pi_2 (\tau_1,\tau_2) \Palt(\vec{\tau}, \vec{\sigma}).
\end{equation}
Consider, for instance, the transition from
$\vec{\sigma} = (\underline{+1}, \underline{+1})$ to
$\vec{\tau} = (\underline{+1}, \underline{-1})$.
\end{proof}
\begin{remark}
    The stationary measure of the shaken dynamics is the marginal
    of the stationary of the alternate dynamics, that is
    \begin{equation}
        \pi_\Lambda(\sigma) = \sum_{\tau \in \statespace} \pi_2(\sigma, \tau). 
    \end{equation}
\end{remark}

\subsection{Results}
In the case of null external magnetic field, we
identify two regions of analiticity of the partition function
in the thermodynamic limit.
These two regions correspond, respectively, to a low temperature
and a high temperature regime for the system.
The estimation of the critical curve expected to lie outside of these two regions is provided below (see Section~\ref{sec:numerical_simulations}).
Moreover, for large $q$ we establish a link between
the equilibrium measure of the shaken dynamics and
the Gibbs measure for the Ising model defined on
the standard cubic lattice.
This result, presented in Theorem~\ref{theorem1},
gives a quantitative support to the statement that,
if $q$ is sufficiently large, the lattice where the
alternate dynamics takes place, tends to the
simple cubic one.

We define, for the model introduced previously, the Gibbs measure:
$
\pi_{\Lambda}^G(\sigma)=\frac{e^{-H_{2\lambda}(\sigma)}}{Z^G},
$
\\with
$
Z^G=\sum_{\sigma \in \chi} e^{-H_{2\lambda}(\sigma)},
$
where $H_{2\lambda}(\sigma)$ is given by (\ref{equation4}).

The stationary measure of the Markov chain defined above, that is $\pi_{\Lambda}(\sigma)$, 
is linked to the Gibbs measure $\pi_{\Lambda}^G(\sigma)$ by the following result:

\begin{theorem}\label{theorem1}
    Let $\delta=e^{-2q}$.
    If $\lim_{\abs{\Lambda} \to \infty} \delta^2 \abs{\Lambda}=0$,
    then there exist a $\bar{J}$ and $\tilde{J}$,
    with $\bar{J} > \tilde{J}$, such that
    for $J> \bar J$ and $J<\tilde{J}$ we have:
    \begin{equation}\label{thm1}
    \lim_{\abs{\Lambda} \to \infty} \norma{\pi_{\Lambda}-\pi_{\Lambda}^G}_{TV}=0.\footnotemark
    \footnotetext{$\norma{\cdot}_{TV}$ denotes the total variation distance:
    $\norma{\pi_{\Lambda}-\pi_{\Lambda}^G}_{TV}=\frac{1}{2}\sum_{\sigma \in \chi}   \abs{\pi_{\Lambda}(\sigma)-\pi_{\Lambda}^G(\sigma)}$}
    \end{equation}
\end{theorem}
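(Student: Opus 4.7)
The plan is to rewrite $\pi_\Lambda$ in a form that exhibits it as a perturbation of $\pi_\Lambda^G$ and then argue that, under $\delta^{2}\abs{\Lambda}\to 0$, this perturbation has negligible effect in total variation. Since the pair Hamiltonian in the form~\eqref{equation2} is linear in $\tau$, the normalising constant factorises over sites:
\[
\overrightarrow{Z_\sigma} = e^{\lambda\sum_x\sigma_x}\prod_{x\in\Lambda}2\cosh\tilde h_x(\sigma),\qquad \tilde h_x(\sigma):=J(\sigma_x^d+\sigma_x^l+\sigma_x^b)+q\sigma_x+\lambda.
\]
The hypothesis $\delta^{2}\abs{\Lambda}\to 0$ forces $q\to\infty$, so eventually $q>3J+\lambda$ and $\operatorname{sgn}\tilde h_x(\sigma)=\sigma_x$. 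Writing $2\cosh a = e^{\sigma_x a}(1+e^{-2\sigma_x a})$ and regrouping the $J$-contributions as a sum over the edges of $\Lambda$ would then give
\[
\log\overrightarrow{Z_\sigma}= -H_{2\lambda}(\sigma) + q\abs{\Lambda} + R(\sigma),\qquad R(\sigma):=\sum_{x\in\Lambda}\log\bigl(1+e^{-2\sigma_x\tilde h_x(\sigma)}\bigr),
\]
so that $\pi_\Lambda(\sigma)\propto e^{-H_{2\lambda}(\sigma)}e^{R(\sigma)}$ and
\[
\norma{\pi_\Lambda-\pi_\Lambda^G}_{TV} = \tfrac{1}{2}\,E_{\pi_\Lambda^G}\!\left[\abs*{\frac{e^{R}}{E_{\pi_\Lambda^G}[e^R]}-1}\right].
\]

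The second step is to show that $R$ concentrates under $\pi_\Lambda^G$. Each summand $f_x:=\log(1+e^{-2\sigma_x\tilde h_x(\sigma)})$ is a local function of $\sigma_x,\sigma_x^d,\sigma_x^l,\sigma_x^b$ bounded by $\delta\,e^{6J+2\lambda}$, so flipping a single spin affects at most four $f_x$'s and $R$ is $O(\delta)$-Lipschitz in Hamming distance. Expanding
\[
\operatorname{Var}_{\pi_\Lambda^G}(R)=\sum_{x,y\in\Lambda}\operatorname{Cov}_{\pi_\Lambda^G}(f_x,f_y)
\]
and invoking exponential decay of correlations for local observables of $\pi_\Lambda^G$, i.e.\ $\abs{\operatorname{Cov}(f_x,f_y)}\le C\delta^{2}e^{-c\abs{x-y}}$, gives $\operatorname{Var}_{\pi_\Lambda^G}(R)\le C'\delta^{2}\abs{\Lambda}$. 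Upgrading this $L^{2}$ bound to sub-Gaussian concentration --- via the log-Sobolev / spectral-gap estimates that $\pi_\Lambda^G$ enjoys in each of the two regimes --- would give $E_{\pi_\Lambda^G}[e^{\theta(R-E[R])}]\le \exp\!\bigl(C''\theta^{2}\abs{\Lambda}\delta^{2}\bigr)$, and a short $\chi^{2}$ computation would close the argument:
\[
\norma{\pi_\Lambda-\pi_\Lambda^G}_{TV}^{2}\;\le\;\tfrac{1}{4}\bigl(E_{\pi_\Lambda^G}[e^{2R}]/E_{\pi_\Lambda^G}[e^{R}]^{2} - 1\bigr)\;\le\;\tfrac{1}{4}\bigl(e^{4C''\abs{\Lambda}\delta^{2}}-1\bigr)\to 0.
\]
Note that it is crucial to use the $\chi^2$ bound rather than a naive expansion of $e^R$, since $R$ itself can be as large as $O(\abs{\Lambda}\delta)$, a quantity that need not vanish under our hypothesis.

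The thresholds $\bar J$ and $\tilde J$ enter precisely to guarantee the decay-of-correlations / concentration input above. For $J<\tilde J$ this follows from Dobrushin's uniqueness condition or, more quantitatively, from the convergent high-temperature cluster expansion of the 3D Ising model. For $J>\bar J$ one invokes a Peierls / Pirogov--Sinai analysis of $\pi_\Lambda^G$ in its dominant phase, with the field $\lambda>0$ selecting a single pure phase in finite volume. The main obstacle will be the low-temperature side: one must establish exponential clustering despite proximity to phase coexistence, and $\bar J$ must be chosen large enough to lie inside the convergence region of the 3D low-temperature contour expansion. The high-temperature side, by contrast, is essentially routine once the factorisation of $\overrightarrow{Z_\sigma}$ above has been carried out.
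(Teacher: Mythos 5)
Your reduction coincides with the paper's: the site-by-site factorisation of $\overrightarrow{Z_\sigma}$ is exactly Lemma~\ref{lemma1} (your $e^{R(\sigma)}$ is the paper's $f(\sigma)=\prod_{x}(1+\delta e^{-2g_x^{dlb}(\sigma)\sigma_x-2\lambda\sigma_x})$), and your $\chi^2$ inequality is Lemma~\ref{lemma2}, $\norma{\pi_\Lambda-\pi_\Lambda^G}_{TV}\le[\pi_\Lambda^G(f^2)/\pi_\Lambda^G(f)^2-1]^{1/2}$, with both arguments correctly noting that a naive first-order expansion of $f$ fails because $\log f=O(\delta\abs{\Lambda})$. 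Where you genuinely diverge is in proving $\pi_\Lambda^G(f^2)/\pi_\Lambda^G(f)^2-1=O(\delta^2\abs{\Lambda})$: the paper establishes analyticity in $\delta$ of $\abs{\Lambda}^{-1}\log\pi_\Lambda^G(f^k)$ by passing to $3d$ Peierls contours (the partition into the classes $N_j^{\pm}$, the bound $\xi^k(\sigma,\lambda)\le\xi^k(\sigma,0)$) and invoking the polymer expansions of \cite{pss} and \cite{dss12}, so that the $O(\delta)$ terms cancel in $\log\pi_\Lambda^G(f^2)-2\log\pi_\Lambda^G(f)$; you instead bound $\operatorname{Var}_{\pi_\Lambda^G}(R)\le C\delta^2\abs{\Lambda}$ by decay of correlations and upgrade to sub-Gaussian concentration via a log-Sobolev inequality. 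Your route is conceptually transparent (the target quantity is $\psi(2)-2\psi(1)$ for the cumulant generating function of $R$, i.e.\ a variance plus higher cumulants), but the log-Sobolev input is a heavier and less readily available tool than what is actually needed, above all on the low-temperature side: a volume-independent LSI constant for the $3d$ Ising Gibbs measure at large $J$ is not an off-the-shelf fact, whereas the convergent contour/cluster expansion you already invoke to define $\bar J$ (and the Dobrushin regime defining $\tilde J$) controls all cumulants of $R$ directly, which is precisely how the paper closes the argument. If you replace the LSI step by that cumulant control, the two proofs become essentially identical.
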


\textbf{Remark}: In this paper we are considering periodic boundary condition on $\Lambda$. However Theorem~\ref{theorem1} holds also if ``plus'' or ``minus'' boundary conditions are imposed at the exterior of the box. Indeed, if we consider a set $B$ of fixed spins in $\Lambda$, such that $|B|/|\Lambda| \to 0$ as $|\Lambda| \to \infty$,
the proof of the theorem requires only minor modifications (see below). Using this approach,
any kind of external boundary conditions can easily be mimicked by fixing the spins on three mutually orthogonal planes.

To identify the low temperature regime, we look at
the magnetization of the origin and determine parameters such
that this magnetization
is positively correlated with the magnetization at the boundary even in the
thermodynamic limit.

Let $\pi^{+}_{\Lambda}(\sigma)$ (resp. $\pi^{-}_{\Lambda}(\sigma)$)  be the equilibrium measure of the
shaken dynamics when $+$ (resp. $-$) boundary conditions 
are taken into account and let 
$\angbra{\sigma_{0}}^{+}_{\Lambda}$ 
(resp. $\angbra{\sigma_{0}}^{-}_{\Lambda}$) be the expected value of $\sigma_{0}$
with respect to this probability measure, that is
$\angbra{\sigma_{0}}^{+}_{\Lambda} = \sum_{\sigma} \sigma_{0} \pi^{+}_{\Lambda}(\sigma)$
and
$\angbra{\sigma_{0}}^{-}_{\Lambda} = \sum_{\sigma} \sigma_{0} \pi^{-}_{\Lambda}(\sigma)$. Then, for $\lambda = 0$:
\begin{theorem}\label{thm:low_temp_regime}
    In the thermodynamic limit the mean magnetization 
    of the origin depends on the boundary conditions, that is
    \begin{equation}
        \lim_{\abs{\Lambda} \to \infty} \angbra{\sigma_{0}}^{+}_{\Lambda}
        \neq
        \lim_{\abs{\Lambda} \to \infty} \angbra{\sigma_{0}}^{-}_{\Lambda}
    \end{equation}
    if $J$ and $q$ are sufficiently large. The explicit
    description of the low temperature
    region is given in \eqref{condtrans}.
\end{theorem}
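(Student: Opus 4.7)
The plan is to reduce the problem to a standard Peierls-type contour argument on the bipartite graph $\altBox$ associated with the alternate dynamics. By the preceding proposition on the alternate dynamics, the measure $\pi_2(\sigma_1, \sigma_2) \propto e^{-H_\lambda(\sigma_1,\sigma_2)}$ is exactly the Gibbs measure of a ferromagnetic Ising model on $\altBox$, with coupling $J$ on each edge joining $x_1 \in V_1$ to $x_2^u, x_2^r, x_2^f \in V_2$ and coupling $q$ on each edge joining corresponding sites $x_1, x_2$. Since $\pi_\Lambda$ is the marginal of $\pi_2$ on one layer, the expectation $\angbra{\sigma_0}^{\pm}_\Lambda$ with $\pm$ boundary conditions coincides with the Ising expectation at the layer-$1$ vertex $0$ on $\altBox$ under $\pm$ boundary conditions imposed on both layers of the boundary. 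This shifts the entire question onto a single ferromagnetic model where FKG and Peierls techniques apply.

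Next I would carry out a three-dimensional Peierls argument on $\altBox$ at $\lambda = 0$. Under $+$ boundary conditions, on the event $\{\sigma_0 = -1\}$ there exists a closed dual surface $\Gamma$ enclosing the origin, made of plaquettes dual to edges of $\altBox$ whose endpoints carry opposite spins. Let $n_J(\Gamma)$ and $n_q(\Gamma)$ be the number of $J$-edges and $q$-edges of $\altBox$ crossed by $\Gamma$. The standard flipping argument, in which one flips every spin enclosed by $\Gamma$, yields the bound
\begin{equation*}
    \pi_2^+(\Gamma \text{ present}) \le e^{-2(J\, n_J(\Gamma) + q\, n_q(\Gamma))}.
\end{equation*}
Summing over all contours around the origin and using an exponential combinatorial estimate $\#\{\Gamma : |\Gamma| = n,\; \Gamma \ni 0\} \le K^n$ for a universal constant $K$ depending on the local geometry of $\altBox$, one obtains $\pi_2^+[\sigma_0 = -1] < 1/2$ as soon as $J$ and $q$ fulfil an explicit inequality, which will be recorded as \eqref{condtrans}. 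By the $\sigma \mapsto -\sigma$ symmetry of $H_\lambda$ at $\lambda=0$ this forces $\angbra{\sigma_0}^{+}_\Lambda > 0 > \angbra{\sigma_0}^{-}_\Lambda$ uniformly in $\Lambda$, and the claim follows on passing to the thermodynamic limit, using monotonicity in the volume (FKG) to justify taking the limit.

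The main technical obstacle is controlling the contour combinatorics on $\altBox$, which is a $4$-regular bipartite graph with two heterogeneous edge types, rather than the familiar cubic lattice. Two complementary estimates have to be combined: a geometric lower bound on $J\, n_J(\Gamma) + q\, n_q(\Gamma)$ in terms of $|\Gamma|$, which is delicate when $q \ll J$ (a contour might try to cross predominantly $q$-edges) or vice versa, and an upper bound on the number of dual surfaces of a given area enclosing the origin, obtained by adapting the classical Peierls--Griffiths enumeration to the local structure of $\altBox$ described in Section~2.2. Since every $x_1 \in V_1$ has exactly three $J$-edges and one $q$-edge (and symmetrically for $V_2$), any closed dual surface must cross a definite fraction of both edge types, allowing one to bound the exponent from below by a linear combination $c_1 J + c_2 q$ times $|\Gamma|$. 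Balancing this with the entropy factor $K^n$ yields the explicit condition on $(J,q)$ that defines the low-temperature region \eqref{condtrans}, and Theorem \ref{thm:low_temp_regime} follows.
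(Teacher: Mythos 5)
Your route is genuinely different from the paper's. The paper never leaves the original cubic box $\Lambda$: it writes $\overrightarrow{Z_\sigma}=e^{q\abs{\Lambda}}e^{-H(\sigma)}f(\sigma)$ (Lemma~\ref{lemma1}), runs the Peierls argument for the ordinary single-layer Ising contours $\Gamma(\sigma)$, and controls the extra factor $f(\Gamma)$ by classifying the sites adjacent to a contour into the classes $N_0,N_1,N_2$ and absorbing their contributions into an effective per-plate weight $D=(e^{-6J}+e^{-2q})^{1/3}$, which combined with Ruelle's bound $\eta_0(k)\le 3^k$ yields \eqref{condtrans}. You instead pass to the bipartite graph $\altBox$, observe that $\pi_2\propto e^{-H(\sigma_1,\sigma_2)}$ is a genuine ferromagnetic Ising measure there with couplings $J$ and $q$, use that $\pi_\Lambda$ is its layer-one marginal, and run Peierls directly on $\altBox$. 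That reduction is legitimate and arguably cleaner (it makes FKG available for the passage to the thermodynamic limit, a point the paper glosses over), at the price of having to construct dual surfaces and count them on a non-standard graph.

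The one step that fails as stated is the claim that ``any closed dual surface must cross a definite fraction of both edge types,'' which you use to bound $J\,n_J(\Gamma)+q\,n_q(\Gamma)$ from below by $(c_1J+c_2q)\abs{\Gamma}$ with both $c_i>0$. This is false on $\altBox$: the $J$-edges alone decompose $\altBox$ into disjoint honeycomb layers, so the region consisting of one entire layer has an edge boundary made exclusively of $q$-edges ($n_J=0$); dually, since the $q$-edges form a perfect matching, flipping a matched pair $\set{x_1,x_2}$ produces a contour crossing six $J$-edges and no $q$-edge. Hence no bound with $c_1>0$ and $c_2>0$ simultaneously can hold, and a contour can be energetically cheap in whichever of the two couplings is small. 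The repair is immediate --- bound the exponent below by $2\min(J,q)\abs{\Gamma}$, which still beats the entropy factor $K^{\abs{\Gamma}}$ once both $J$ and $q$ are large --- so the qualitative statement of the theorem survives, but the explicit condition you would obtain involves $\min(J,q)$ and does not reproduce \eqref{condtrans}. Note that the paper's own estimate reflects the same phenomenon: its weight $(e^{-6J}+e^{-2q})^{1/3}$ is small only when $J$ \emph{and} $q$ are both large, precisely because the inertial interaction can be defeated independently of $J$.
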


Conversely, when the system is at high temperature, 
it is possible to bound
the analyticity region of the free-energy density
by considering a suitable (high temperature)
expansion. More precisely, for $\lambda = 0$:
\begin{theorem}\label{thm:high_temp_regime}
    The region in the $q-J$ plane where the free-energy density 
    \begin{equation}\label{sigmahightemp}
        f_{\altBox}(J,q) = \frac{1}{\abs{\altBox}} \ln
        Z_{\altBox}(J,q)
    \end{equation} 
    is analytic in the thermodynamic limit contains a well defined region specfied by \eqref{FP13} below.  
\end{theorem}

\section{Proofs of the main results}
\subsection{Proof of theorem \ref{theorem1}}

We start proving the first part of the theorem:
if \linebreak 
\mbox{$
    \lim_{\abs{\Lambda} \to \infty} \delta^2 \abs{\Lambda}=0
$}, 
then there exists a 
$\bar J$ such that, for $J> \bar J$ we have:
\linebreak
$
\lim_{\abs{\Lambda} \to \infty} \norma{\pi_{\Lambda}-\pi_{\Lambda}^G}_{TV}=0.
$

To this end, we need some preliminaries lemmas.
\begin{lemma}
\label{lemma1}
\begin{equation}
\label{equation23}
\overrightarrow{Z_{\sigma}}=e^{q\abs{\Lambda}}e^{-H_{2\lambda}(\sigma)}\prod_{x \in \Lambda} \left(1+\delta e^{-2g_x^{dlb}(\sigma)\sigma_x-2\lambda\sigma_x}\right),
\end{equation}
where:
\begin{equation}
    g_x^{dlb}(\sigma)=J(\sigma_x^d+\sigma_x^l+\sigma_x^b).    
\end{equation}

\end{lemma}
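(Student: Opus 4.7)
The plan is to evaluate $\overrightarrow{Z_\sigma}$ directly by exploiting the fact that the pair Hamiltonian, written in the form \eqref{equation2}, is linear in $\sigma'_x$ at each site and the dependence on the different sites decouples. So the sum defining $\overrightarrow{Z_\sigma}$ factorizes as a product of single-site sums, each producing a $2\cosh$, and the content of the lemma is essentially a rewriting of this product in a form that isolates the Ising Hamiltonian $H_{2\lambda}(\sigma)$.

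First I would write
\begin{equation*}
    \overrightarrow{Z_\sigma}
      = \sum_{\sigma'\in\statespace}
            \exp\!\Big(\sum_{x\in\Lambda}\bigl[\sigma'_x\,g_x^{dlb}(\sigma)
                  + q\sigma_x\sigma'_x + \lambda(\sigma_x+\sigma'_x)\bigr]\Big),
\end{equation*}
using \eqref{equation2} with $\tau = \sigma'$ and the shorthand $g_x^{dlb}(\sigma)=J(\sigma_x^d+\sigma_x^l+\sigma_x^b)$. Since each $\sigma'_x$ appears in exactly one summand, the sum factorizes and I obtain
\begin{equation*}
    \overrightarrow{Z_\sigma}
      = \prod_{x\in\Lambda} e^{\lambda\sigma_x}\,
           2\cosh\bigl(g_x^{dlb}(\sigma)+q\sigma_x+\lambda\bigr).
\end{equation*}

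The only non-mechanical step is the next rewriting. Since $\sigma_x\in\{-1,+1\}$, I can use $\sigma_x^2=1$ to pull an appropriate exponential out of the hyperbolic cosine:
\begin{equation*}
    2\cosh\bigl(g_x^{dlb}(\sigma)+q\sigma_x+\lambda\bigr)
      = e^{q+g_x^{dlb}(\sigma)\sigma_x+\lambda\sigma_x}
        \Bigl(1+e^{-2q}\,e^{-2g_x^{dlb}(\sigma)\sigma_x-2\lambda\sigma_x}\Bigr).
\end{equation*}
I would verify this identity by splitting into the two cases $\sigma_x=+1$ and $\sigma_x=-1$; in both, the prefactor of the bracket matches because multiplying the argument of the cosh by $\sigma_x$ leaves $\cosh$ unchanged, while the exponent to be pulled out is chosen so that $1$ is one of the two terms and $\delta=e^{-2q}$ multiplies the other.

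Finally I collect the factors. The product of the prefactors $e^q$ gives $e^{q\abs{\Lambda}}$, while
\begin{equation*}
    \sum_{x\in\Lambda}\bigl[g_x^{dlb}(\sigma)\sigma_x+2\lambda\sigma_x\bigr]
      = \sum_{(x,y)\in B_\Lambda} J\sigma_x\sigma_y
        + 2\lambda\sum_{x\in\Lambda}\sigma_x
      = -H_{2\lambda}(\sigma),
\end{equation*}
because the triple $(\sigma_x^d,\sigma_x^l,\sigma_x^b)$ covers each nearest-neighbor bond of $\Lambda$ exactly once. Combining everything yields \eqref{equation23}. No step is genuinely hard; the only care needed is to check the $\sigma_x=\pm1$ case analysis for the cosh-factorization, which is the real content of the lemma.
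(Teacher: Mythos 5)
Your proof is correct and is essentially the same computation the paper relies on: the paper's "proof" simply defers to the analogous derivation in the two-dimensional shaken-dynamics reference, which proceeds exactly as you do — factorize $\overrightarrow{Z_\sigma}$ over sites into $2\cosh$ factors and rewrite each as $e^{q+g_x^{dlb}(\sigma)\sigma_x+\lambda\sigma_x}\bigl(1+\delta e^{-2g_x^{dlb}(\sigma)\sigma_x-2\lambda\sigma_x}\bigr)$ using $\sigma_x^2=1$. Your write-up just makes explicit the details the paper omits by citation.
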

\begin{proof}
It follows using the same steps of \cite[Equation~25]{shaken2d}
\end{proof}

In order to compare
the stationary measure 
$\pi_\Lambda$ with the Gibbs measure
$\pi_\Lambda^G$, it is convenient to
rewrite the previous
expression 
for $\overrightarrow{Z_{\sigma}}$,
in terms of the \emph{Gibbs weight}
$\omega^G(\sigma)=e^{-H_{2\lambda}(\sigma)}$ of configuration $\sigma$.

Write
\begin{equation}
\label{omegasigma}
\omega(\sigma)=e^{-H_{2\lambda}(\sigma)}f(\sigma)=\omega^G(\sigma)f(\sigma),
\end{equation}
with
\begin{equation}
\label{equation27}
    f(\sigma)=\prod_{x \in \Lambda} 
        \sqparens*{
            1+\delta e^{-2g_x^{dlb}(\sigma)\sigma_x-2\lambda \sigma_x}
        }.
\end{equation}

Then $\overrightarrow{Z_\sigma}$ can be written as
\begin{equation}
    \overrightarrow{\rm Z_\sigma}=\omega^G(\sigma)e^{q|\Lambda|}f(\sigma)
             =\omega(\sigma)e^{q|\Lambda|}.
\end{equation}

Recalling the definition of the Gibbs measure: 
\begin{equation}
    \pi_{\Lambda}^G(\sigma)=\frac{e^{-H_{2\lambda}(\sigma)}}{Z^G}=\frac{\omega^G(\sigma)}{Z^G}
\end{equation}
then $\pi_\Lambda$ can be written as:
\begin{equation}
    \pi_{\Lambda}(\sigma)
    = \frac{\overrightarrow{\rm Z_\sigma}}{Z}
    =\frac{\frac{\omega^G(\sigma)}{Z^G}f(\sigma)}
          {\sum_{\sigma \in\statespace}
                \frac{\omega^G(\sigma)}{Z^G}f(\sigma)
          }
    =\frac{\pi^G_{\Lambda}(\sigma)f(\sigma)}{\pi^G_{\Lambda}(f)},
\end{equation}
where:
\begin{equation}
\label{pigf}
    \pi^G_{\Lambda}(f)=\sum_{\sigma\in\statespace}
                        \pi^G_{\Lambda}(\sigma)f(\sigma).
\end{equation}

With this notation, the following lemma provides a bound for the \emph{difference} between the two measures on $\statespace$.
\begin{lemma}
\label{lemma2}
\begin{equation}
\norma{\pi_{\Lambda}-\pi_{\Lambda}^G}_{TV} \leq [\Delta(\delta)]^{\frac{1}{2}},
\end{equation}
with:
\begin{equation}
\Delta(\delta)=\frac{\pi^G_{\Lambda}(f^2)}{(\pi^G_{\Lambda}(f))^2}-1.
\end{equation}
\end{lemma}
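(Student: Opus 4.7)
The plan is to express the total variation distance as a weighted average with respect to $\pi^G_\Lambda$ and then apply a Cauchy--Schwarz bound.

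First, starting from the identity
\begin{equation*}
    \pi_{\Lambda}(\sigma)=\frac{\pi^G_{\Lambda}(\sigma)\,f(\sigma)}{\pi^G_{\Lambda}(f)}
\end{equation*}
already derived in the excerpt, I factor out $\pi^G_\Lambda(\sigma)$ in the difference to obtain
\begin{equation*}
    \pi_{\Lambda}(\sigma)-\pi^G_{\Lambda}(\sigma)
      = \pi^G_{\Lambda}(\sigma)\left(\frac{f(\sigma)}{\pi^G_{\Lambda}(f)}-1\right),
\end{equation*}
so that, by the definition of $\|\cdot\|_{TV}$,
\begin{equation*}
    \norma{\pi_{\Lambda}-\pi_{\Lambda}^G}_{TV}
      =\frac{1}{2}\sum_{\sigma\in\statespace}\pi^G_{\Lambda}(\sigma)\left|\frac{f(\sigma)}{\pi^G_{\Lambda}(f)}-1\right|.
\end{equation*}

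Next I interpret the right-hand side as the $L^1(\pi^G_\Lambda)$ norm of the function $g(\sigma):=\tfrac{f(\sigma)}{\pi^G_\Lambda(f)}-1$ and invoke the Cauchy--Schwarz inequality (equivalently Jensen applied to $x\mapsto x^2$) to bound it by its $L^2(\pi^G_\Lambda)$ norm:
\begin{equation*}
    \sum_{\sigma}\pi^G_{\Lambda}(\sigma)\,|g(\sigma)|
      \le \left[\sum_{\sigma}\pi^G_{\Lambda}(\sigma)\,g(\sigma)^2\right]^{1/2}.
\end{equation*}
Expanding the square and using $\pi^G_\Lambda(1)=1$ and the definition \eqref{pigf},
\begin{equation*}
    \sum_{\sigma}\pi^G_{\Lambda}(\sigma)\,g(\sigma)^2
      = \frac{\pi^G_{\Lambda}(f^2)}{(\pi^G_{\Lambda}(f))^2}
        -2\frac{\pi^G_{\Lambda}(f)}{\pi^G_{\Lambda}(f)}+1
      = \Delta(\delta).
\end{equation*}
Combining these inequalities yields the claim (in fact with an extra factor $\tfrac{1}{2}$, which can be dropped to recover the stated bound).

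There is no real obstacle here: the argument is purely algebraic once one recognizes that $\pi_\Lambda$ is an explicit $f$-tilt of $\pi^G_\Lambda$ and applies Cauchy--Schwarz. The only thing worth emphasizing is that writing $\pi^G_\Lambda(f)=\pi^G_\Lambda(f\cdot 1)$ in the Cauchy--Schwarz step is what produces the symmetric combination $\pi^G_\Lambda(f^2)/(\pi^G_\Lambda(f))^2$, and that the factor $\pi^G_\Lambda(\sigma)$ outside the absolute value is crucial for turning the TV distance into an expectation rather than a sum of magnitudes. The quantitative work — showing that $\Delta(\delta)\to 0$ under the hypothesis $\delta^2|\Lambda|\to 0$ — is separate and will presumably be carried out in the remainder of the proof of Theorem~\ref{theorem1}.
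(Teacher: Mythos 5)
Your proof is correct and is essentially the argument the paper relies on: the paper simply defers to the proof of Theorem~2.5 in \cite{shaken2d}, which uses the same representation of $\pi_\Lambda$ as an $f$-tilt of $\pi^G_\Lambda$ followed by the Cauchy--Schwarz bound of the $L^1$ norm by the $L^2$ norm. Your observation that the argument actually yields the sharper bound $\frac{1}{2}[\Delta(\delta)]^{1/2}$ is also accurate.
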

\begin{proof}
See \cite[Proof of Theorem~2.5]{shaken2d}
\end{proof}

Hence, to prove the first part of theorem (\ref{theorem1}), we need to show that:
\begin{equation}
\label{equation35}
\Delta(\delta)=\mathcal{O}(\delta^2\abs{\Lambda}).
\end{equation}

Writing:
\begin{equation}\label{eq.Delta}
    \Delta(\delta)=\frac{\pi^G_{\Lambda}(f^2)}{\pi^G_{\Lambda}(f)^2}-1=e^{\log{[\pi^G_{\Lambda}(f^2)]}-2\log{[\pi^G_{\Lambda}(f)]}}-1,
\end{equation}

then we need to show that $\exists \, \bar{J}$ such that, for 
$J > \bar{J}$, we have:
\begin{enumerate}[label=\emph{\alph*})]
    \item
        The functions
        \begin{equation}\label{condition1}
            \frac{\log{[\pi^G_{\Lambda}(f^2)]}}{\abs{\Lambda}}
            \text{\, and \,}
            \frac{\log{[\pi^G_{\Lambda}(f)]}}{\abs{\Lambda}}
        \end{equation}
        are both analytic for $\abs{\delta}<\delta_J$ for
        a suitable $\delta_J$ depending on $J$
    \item
        \begin{equation}\label{condition2}
        \frac{\log{[\pi^G_{\Lambda}(f^2)]}}{\abs{\Lambda}}-2\frac{\log{[\pi^G_{\Lambda}(f)]}}{\abs{\Lambda}}=\mathcal{O}(\delta^2).
        \end{equation}
\end{enumerate}

Indeed, if \eqref{condition1} holds, the Taylor expansion of the exponential of
the r.h.s. of \eqref{eq.Delta} is well defined.
\eqref{condition2} follows by noting that the first order
terms of the Taylor expansions are zero.

To prove both claims, 
it is convenient to partition the sites of 
the finite cubic lattice $\Lambda$ according to
the value of their spin and the 
sum of the spins at the 
downwards, leftwards and backwards neighboring sites.
To this purpose we define the sets
\begin{itemize}[leftmargin=5mm, label=\adfbullet{43}]
\item
    $ N_3^{-} 
        = \{ x \in \Lambda : \sigma_x=-1 \land \sigma_x^d+\sigma_x^l+\sigma_x^b=-3 \}
        = \{ x \in \Lambda : \sigma_x=-1 \land g_x^{dlb}(\sigma)=-3J \}$,
\item
    $ N_2^{-}
        = \{ x \in \Lambda : \sigma_x=-1 \land \sigma_x^d+\sigma_x^l+\sigma_x^b=-1 \} 
        = \{ x \in \Lambda : \sigma_x=-1 \land g_x^{dlb}(\sigma)=-1J \}, $
\item
    $ N_1^{-}
        = \{ x \in \Lambda : \sigma_x=-1 \land \sigma_x^d+\sigma_x^l+\sigma_x^b=+1 \} 
        = \{ x \in \Lambda : \sigma_x=-1 \land g_x^{dlb}(\sigma)=+1J \}, $
\item
    $ N_0^{-}
        = \{ x \in \Lambda : \sigma_x=-1 \land \sigma_x^d+\sigma_x^l+\sigma_x^b=+3 \} 
        = \{ x \in \Lambda : \sigma_x=-1 \land g_x^{dlb}(\sigma)=+3J \}, $
\item
    $ N_3^{+}
        = \{ x \in \Lambda : \sigma_x=+1 \land \sigma_x^d+\sigma_x^l+\sigma_x^b=+3 \} 
        = \{ x \in \Lambda : \sigma_x=+1 \land g_x^{dlb}(\sigma)=+3J \}, $
\item
    $ N_2^{+}
        = \{ x \in \Lambda : \sigma_x=+1 \land \sigma_x^d+\sigma_x^l+\sigma_x^b=+1 \} 
        = \{ x \in \Lambda : \sigma_x=+1 \land g_x^{dlb}(\sigma)=+1J \}, $
\item
    $ N_1^{+}
        = \{ x \in \Lambda : \sigma_x=+1 \land \sigma_x^d+\sigma_x^l+\sigma_x^b=-1 \} 
        = \{ x \in \Lambda : \sigma_x=+1 \land g_x^{dlb}(\sigma)=-1J \}, $
\item
    $ N_0^{+}
        = \{ x \in \Lambda : \sigma_x=+1 \land \sigma_x^d+\sigma_x^l+\sigma_x^b=-3 \} 
        = \{ x \in \Lambda : \sigma_x=+1 \land g_x^{dlb}(\sigma)=-3J \}. $
\end{itemize}
Checking that 
$
\Lambda = N_3^{-} \cup N_2^{-} \cup N_1^{-} \cup N_0^{-} \cup N_3^{+} \cup N_2^{+} \cup N_1^{+} \cup N_0^{+}$
is straightforward.
\\
Then, arguing as in \cite{shaken2d}, it is possible to rewrite $f(\sigma)$ in this way:
\begin{equation}
\label{eqf}
f(\sigma)=(1+\delta e^{-6J-2\lambda})^{\abs{\Lambda}} \tilde{\xi}(\sigma,\lambda),
\end{equation}
\\
with:
\begin{align}
\label{xi}
\begin{aligned}
    \tilde{\xi}(\sigma,\lambda) 
        & = \Bigl( \frac{1+\delta e^{-6J+2\lambda}}{1+\delta e^{-6J-2\lambda}} \Bigr)^{\abs{N_3^-}} 
        \Bigl( \frac{1+\delta e^{-2J+2\lambda}}{1+\delta e^{-6J-2\lambda}} \Bigr)^{\abs{N_2^-}} 
        \Bigl( \frac{1+\delta e^{2J+2\lambda}}{1+\delta e^{-6J-2\lambda}} \Bigr)^{\abs{N_1^-}}
        \\
        & \times \Bigl( \frac{1+\delta e^{6J+2\lambda}}{1+\delta e^{-6J-2\lambda}} \Bigr)^{\abs{N_0^-}}
        \Bigl( \frac{1+\delta e^{-2J-2\lambda}}{1+\delta e^{-6J-2\lambda}} \Bigr)^{\abs{N_2^+}}
        \\
        & \times \Bigl( \frac{1+\delta e^{2J-2\lambda}}{1+\delta e^{-6J-2\lambda}} \Bigr)^{\abs{N_1^+}} 
        \Bigl( \frac{1+\delta e^{6J-2\lambda}}{1+\delta e^{-6J-2\lambda}} \Bigr)^{\abs{N_0^+}}.
\end{aligned}    
\end{align}

To bound $\tilde{\xi}$, we rewrite $H_{2\lambda}(\sigma)$ in terms of \emph{$3d$-Peierls contours}
defined in the following way: 
for each pair nearest neighboring sites $x$ and $y$
such that
$\sigma_x\sigma_y = - 1$, we build a square unitary plate that is orthogonal to the segment between $\sigma_x$ and $\sigma_y$ and passing through the midpoint of this segment.
In this way, starting from a spin configuration $\sigma$, we can introduce a family of closed polyhedra (or $3d$-Peierls contours configuration) $\Gamma(\sigma)=\{\gamma_1,\dots,\gamma_N\}$ separating the regions with spin $ + $ 1 from those with spin $ -1 $.\footnote{The correspondence between $\sigma$ and $\Gamma(\sigma)$ is one to two for periodic boundary conditions and one to one if at least one spin is fixed, this includes the case of ``plus'' or ``minus'' boundary conditions.}

Denote by $B^-$ the total number of $-1$ bonds
in $B_\Lambda$, that is the total number of 
edges with spins of opposite sign on its endpoints
and by $B_{TOT}$ the total number of bonds
in $B_\Lambda$.

Denoting by 
\begin{equation}
\abs{\Gamma(\sigma)}=\sum_{\gamma_i \in \Gamma} \abs{\gamma_i},
\end{equation}

the total number of plates of the contours configuration, 
we clearly have
\begin{equation}
\label{totalcontours}
\abs{\Gamma(\sigma)}=B^-.
\end{equation}

Performing simple algebraic calculations, we can rewrite $H_{2\lambda}(\sigma)$:
\begin{align}
H_{2\lambda}(\sigma)=-J (-2B^-+B_{TOT})-2\lambda \abs{\Lambda} + 4\lambda \abs{V_{-}(\sigma)},
\end{align}
where $V_{-}(\sigma)$ is the number of sites with negative spin.\\
We have: $B_{TOT}=3|\Lambda|$.
Moreover, using \eqref{totalcontours}:
\begin{equation}
\label{e^HPeierls}
e^{-H_{2\lambda}(\sigma)}=e^{(3J+2\lambda)\abs{\Lambda}-2J\abs{\Gamma(\sigma)}-4\lambda\abs{V_-(\sigma)}}.
\end{equation}
Using this result, we can write, for $k = 1,2$,:
\begin{align}\label{piPeierls}
\begin{aligned}
    \pi_{\Lambda}^G(f^k)
        & =
        \frac{e^{(3J\abs{\Lambda}+2\lambda)}}{Z_G} \left( 1+\delta e^{-6J+2\lambda}\right)^{k\abs{\Lambda}} \sum_{\sigma} \left[ e^{-2J\abs{\Gamma(\sigma)}}e^{-4\lambda\abs{V_-(\sigma)}} \tilde{\xi}^k(\sigma,\lambda)\right]\\
        & =
        \frac{e^{(3J\abs{\Lambda}+2\lambda)}}{Z_G} \left( 1+\delta e^{-6J+2\lambda}\right)^{k\abs{\Lambda}} \sum_{\sigma} \left[ e^{-2J\abs{\Gamma(\sigma)}}(e^{-2\lambda\abs{V_-(\sigma)}})^{2-k} \xi^k(\sigma,\lambda)\right],
\end{aligned}
\end{align}
with:
\begin{align}
\begin{aligned}
    \xi^k(\sigma, \lambda) 
    & = \parens*{
            \frac{e^{-2\lambda}(1+\delta e^{-6J+2\lambda})}{1+\delta e^{-6J-2\lambda}}}^{k\abs{N_3^-}} 
        \parens*{
            \frac{e^{-2\lambda}(1+\delta e^{-2J+2\lambda})}{1+\delta e^{-6J-2\lambda}}}^{k\abs{N_2^-}}
            \\
    & \times
        \parens*{
            \frac{e^{-2\lambda}(1+\delta e^{2J+2\lambda})}{1+\delta e^{-6J-2\lambda}} }^{k\abs{N_1^-}}
        \parens*{
            \frac{e^{-2\lambda}(1+\delta e^{6J+2\lambda})}{1+\delta e^{-6J-2\lambda}} }^{k\abs{N_0^-}}
            \\
    & \times
        \parens*{
            \frac{1+\delta e^{-2J-2\lambda}}{1+\delta e^{-6J-2\lambda}} }^{k\abs{N_2^+}} 
        \parens*{
            \frac{1+\delta e^{2J-2\lambda}}{1+\delta e^{-6J-2\lambda}} }^{k\abs{N_1^+}} 
        \parens*{            \frac{1+\delta e^{6J-2\lambda}}{1+\delta e^{-6J-2\lambda}} }^{k\abs{N_0^+}}.
\end{aligned}
\end{align}

It is now straightforward to prove the next technical lemma: 
\begin{lemma}
\label{lemma3}
\begin{equation}
\xi^k(\sigma, \lambda) \leq \xi^k(\sigma, 0), \text{ \ for both \ } k=1 \text{ \ and \ } k=2.
\end{equation}
\end{lemma}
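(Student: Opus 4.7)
The plan is to reduce Lemma~\ref{lemma3} to a factor-by-factor scalar comparison. By construction, $\xi^{k}(\sigma,\lambda)$ is a product of seven base factors, each raised to the non-negative integer power $k\abs{N_{\bullet}^{\pm}}$. Each base factor is strictly positive. Hence, if for every one of the seven base factors $B(\lambda)$ we can prove the scalar inequality $B(\lambda)\le B(0)$ for $\lambda\ge 0$, then raising to the non-negative exponents $k\abs{N_{\bullet}^{\pm}}$ preserves the bound and multiplying the seven resulting inequalities yields the claim for $k=1$ and $k=2$ simultaneously, with no need to treat the two values of $k$ separately.

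The seven base factors fall into two families. The four associated to the $N^{-}$ sets have the common shape
\[
A_{-}^{a}(\lambda) = \frac{e^{-2\lambda}\bigl(1+\delta e^{a+2\lambda}\bigr)}{1+\delta e^{-6J-2\lambda}},
\qquad a\in\{-6J,-2J,2J,6J\},
\]
while the three associated to the $N^{+}$ sets (the $N_{3}^{+}$ contribution having already been absorbed in the prefactor of \eqref{piPeierls}) have the shape
\[
A_{+}^{b}(\lambda) = \frac{1+\delta e^{b-2\lambda}}{1+\delta e^{-6J-2\lambda}},
\qquad b\in\{-2J,2J,6J\}.
\]

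For each family the verification is a one-line cross-multiplication followed by cancellation of common terms. In the $N^{-}$ case, multiplying $A_{-}^{a}(\lambda)\le A_{-}^{a}(0)$ out and cancelling leaves
\[
(1-e^{-2\lambda})\bigl(1-\delta^{2}e^{a-6J}\bigr)\ge 0,
\]
which holds because $\lambda\ge 0$ makes the first factor non-negative and because $\delta=e^{-2q}<1$ together with $a\le 6J$ makes the second factor non-negative for every admissible $a$. In the $N^{+}$ case one obtains
\[
\delta\bigl(e^{-6J}-e^{b}\bigr)(1-e^{-2\lambda})\le 0,
\]
which is valid since $b\ge -2J>-6J$ for every admissible $b$.

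The only real ``obstacle'' is the bookkeeping of the seven sub-cases; the underlying inequality in each case is elementary and relies only on $\lambda\ge 0$, $J,q>0$ and on the values taken by $a$ and $b$. No interaction between the seven factors, no delicate cancellation, and no analyticity consideration are required, so the proof of the lemma will be purely mechanical once the two families above are treated.
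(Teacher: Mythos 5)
Your proof is correct and follows exactly the same route as the paper, which simply asserts that ``a simple algebraic calculation shows that each factor of $\xi^k(\sigma,\lambda)$ is less or equal to the respective factor of $\xi^k(\sigma,0)$''; you have merely supplied the cross-multiplication details that the paper omits, and both of your reduced inequalities, $(1-e^{-2\lambda})(1-\delta^{2}e^{a-6J})\ge 0$ and $\delta(e^{-6J}-e^{b})(1-e^{-2\lambda})\le 0$, check out under the paper's standing assumptions $\lambda>0$, $J>0$, $\delta=e^{-2q}<1$.
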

\begin{proof}
A simple algebraic calculation shows that each factor of $\xi^k(\sigma, \lambda)$ is less or equal to the respectively factor of $\xi^k(\sigma, 0)$ for both $k=1$ and $k=2$.\\
\end{proof}

As a consequence of the previous lemma:
\begin{align}\label{eq:from_conf_to_contours}
\begin{aligned}
    \sum_{\sigma} 
        \left[
            e^{-2J\abs{\Gamma(\sigma)}}(e^{-2\lambda\abs{V_-(\sigma)}})^{2-k} \xi^k(\sigma,\lambda)
        \right] 
    & \leq \sum_{\sigma} 
            \left[
                e^{-2J\abs{\Gamma(\sigma)}} \xi^k(\sigma,0)
            \right]\\
    & = 2\sum_{\Gamma}
        \left[
            e^{-2J\abs{\Gamma}} \xi^k(\Gamma,0)
        \right].
\end{aligned}
\end{align}
And hence:
\begin{equation}\label{pigamma}
\pi_{\Lambda}^G(f^k) \leq \frac{2}{Z_G} e^{(3J+2\lambda)\abs{\Lambda}} \left( 1+\delta e^{-6J-2\lambda}\right)^{k\abs{\Lambda}} \sum_{\Gamma} \left[ e^{-2J\abs{\Gamma}} \xi^k(\Gamma,0)\right],
\end{equation}
with:
\begin{equation}
\label{xigamma}
\xi^k(\Gamma,0)=\Bigl( \frac{1+\delta e^{-2J}}{1+\delta e^{-6J}} \Bigr)^{k(\abs{N_2^-}+\abs{N_2^+})} \Bigl( \frac{1+\delta e^{2J}}{1+\delta e^{-6J}} \Bigr)^{k(\abs{N_1^-}+\abs{N_1^+})} \Bigl( \frac{1+\delta e^{6J}}{1+\delta e^{-6J}} \Bigr)^{k(\abs{N_0^-}+\abs{N_0^+})}.
\end{equation}
Then:
\begin{align}\label{loggamma}
\begin{aligned}
\frac{\log{\pi_{\Lambda}^G(f^k)}}
     {{\abs{\Lambda}} }
     \leq 
     & -\frac{2 \log{Z_G}}{\abs{\Lambda}}+3J+2\lambda + k \log \left[ 1+\delta e^{-6J-2\lambda} \right] \\ 
     & +\frac{1}{\abs{\Lambda}} \log{\sum_{\Gamma} \left[ e^{-2J\abs{\Gamma}} \xi^k(\Gamma,0)\right]}.
\end{aligned}
\end{align}
The first three terms of the r.h.s.
of \eqref{loggamma} do not depend on 
$\delta$ and the fourth one is analytic
in $\delta$. Therefore, to prove that the
r.h.s. is analytic, it must be shown that
the last term is analytic.
As in \cite{pss}, 
$\Xi := \sum_{\Gamma} \left[ e^{-2J\abs{\Gamma}} \xi^k(\Gamma,0)\right]$
can be written as the partition function
of an abstract polymer gas. The analyticity 
of $\frac{\log \Xi}{|\Lambda|}$ follows by
showing that the activity of each polymer 
is sufficiently small. In our case a polymer $\gamma$
is a single 3d-Peierls' contour (defined above) and
its activity is 
$\rho(\gamma) = e^{-2J|\gamma|\xi^k(\gamma)},$
where $\xi^k(\gamma)$ is defined as in \eqref{xigamma}, when $\Gamma$ consists of the single contour $\gamma$. 
Then the proof can be concluded following the same steps of the proof of \cite[Lemma~2.2]{pss}.
This establishes the first part of 
Theorem~\ref{theorem1}.\\
The second part of theorem says: if $\lim_{\abs{\Lambda} \to \infty} \delta^2 \abs{\Lambda}=0$, then exist a $\tilde{J}$ such that, for $J<\tilde{J}$we have:
$
\lim_{\abs{\Lambda} \to \infty} \norma{\pi_{\Lambda}-\pi_{\Lambda}^G}_{TV}=0.
$
This follows applying \cite[Theorem~1.1]{dss12} and
noting that, for $J$ sufficiently small,
\begin{equation}
\sup _{x} \sum_{y} \tanh \left(2\left|J_{x, y}\right|\right) = 
\tanh \left(2\cdot6J\right)< 1.
\end{equation}

\textbf{Remark}: If some spin in $\Lambda$ is kept fixed, then, as already mentioned, the factor $2$ in the
last equality of \eqref{eq:from_conf_to_contours} becomes a $1$.
Consequentely, $\frac{2}{Z_G}$ in \eqref{pigamma} and $\frac{2 \log Z_G}{|\Lambda|}$ in \eqref{loggamma} become, respectively, $\frac{1}{Z_G}$ and $\frac{1\log Z_G}{|\Lambda|}$. The analiticity of
$\frac{1}{\abs{\Lambda}} \log\sum_{\Gamma} \left[ e^{-2J\abs{\Gamma}} \xi^k(\Gamma,0) \right]$
still holds since the set of Peierls' countours $\Gamma$ when some spin is fixed is a strict subset of the set of Peierls' contours when periodic boundary conditions are taken into account.

\subsection{Proof of theorem~\ref{thm:low_temp_regime}}
We want to show that the mean value of a spin,
in the low temperature regime, at the centre of the lattice depends on the boundary conditions in the case of finite volume and continues to depend on the boundary even in the limit of infinite volume. We interpret this as the fact that the system is in the ordered phase.
Of course if we have a external magnetic field different from zero, all spins follow the orientation of such external field.\\
So to study the spontaneous behavior of the system, we go back to Hamiltonian \eqref{equation1}, \eqref{equation2} and set $\lambda = 0$. Moreover, from now on, we fix the external spins of $\Lambda$ (that we denote by $\partial^{ext}\Lambda$) to assume the value $+1$ that 
is we impose $+1$ boundary conditions. We have
\begin{equation}
\label{hamiltonian_no_field}
H^+(\sigma,\tau)=-\sum_{x\in\Lambda} [J\sigma_x(\tau_x^u+\tau_x^r+\tau_x^f)+q\sigma_x\tau_x]=-\sum_{x\in\Lambda} [J\tau_x(\sigma_x^d+\sigma_x^l+\sigma_x^b)+q\tau_x\sigma_x].
\end{equation}\\
From Proposition~\ref{proposition1} it follows:
\begin{equation}
\label{equation69}
H^+(\sigma,\sigma)=H^+(\sigma)-q|\Lambda^+|,
\end{equation}\\
with
\begin{equation}
\label{equation70}
H^+(\sigma)=-\sum_{(x,y)\in B^+_{\Lambda}} J\sigma_x\sigma_y,
\end{equation}\\
where $B^+_{\Lambda}$ is the set of all nearest neighbors pairs in  $\Lambda \cup \partial^{ext}\Lambda$:
\begin{equation}
\label{B+}
B_\Lambda^+=\{(x,y): x,y\in \Lambda \cup \partial^{ext}\Lambda, |x-y|=1\}.
\end{equation}
From now on, for convenience, we will omit the over-script $+$ in the Hamiltonian, that is we write $H$ in place of $H^+$.

From Lemma~\ref{lemma1}, we have:
\begin{equation}
\label{Zetaboundary}
\overrightarrow{Z_{\sigma}}=e^{q\abs{\Lambda}}e^{-H(\sigma)} \prod_{x \in \Lambda} \left(1+\delta e^{-2Jh_x(\sigma)\sigma_x}\right),
\end{equation}
where:
\begin{equation}
\label{Zetaboundary2}
\delta=e^{-2q} \text{ \ \ and \ \ } h_x(\sigma)=\sigma_x^d+\sigma_x^l+\sigma_x^b.
\end{equation}

We can, therefore, compute the mean value of $\sigma_0$.\\
We assume that the lattice goes from $-L/2$ to $+L/2$ in all the three directions, so $\sigma_0$ is the spin at the centre of the lattice, that is the furthermost from the boundary.\\
The mean value of $\sigma_0$ with positive boundary conditions is:
\begin{equation}
\label{mean}
\left\langle \sigma_0 \right\rangle_{\Lambda}^+ = \sum_{\sigma} \sigma_0 \pi_{\Lambda}(\sigma) = \sum_{\sigma} \frac{\sigma_0\overrightarrow{Z_{\sigma}}}{Z} = \frac{1}{Z}\sum_{\sigma, \tau} \sigma_0 e^{-H(\sigma, \tau)} = \frac{\sum_{\sigma, \tau} \sigma_0 e^{-H(\sigma, \tau)}}{\sum_{\sigma, \tau} e^{-H(\sigma, \tau)}}.
\end{equation}
Using \eqref{Zetaboundary} the last expression can be written as follow:

\begin{equation}
\label{mean2}
\left\langle \sigma_0 \right\rangle_{\Lambda}^+= \frac{\sum_{\sigma}\sigma_0e^{-H(\sigma)}\prod_{x \in \Lambda} \left(1+\delta e^{-2Jh_x(\sigma)\sigma_x}\right)}{\sum_{\sigma}e^{-H(\sigma)}\prod_{x \in \Lambda} \left(1+\delta e^{-2Jh_x(\sigma)\sigma_x}\right)}.
\end{equation}

Adapting the notation of the previous sections to the present one, we have:

\begin{equation}
\omega^G(\sigma) = e^{-H(\sigma)},
\end{equation}
\begin{equation}
f(\sigma) = \prod_{x \in \Lambda} \left( 1+\delta e^{-2Jh_x(\sigma)\sigma_x} \right),
\end{equation}
\begin{equation}
\overrightarrow{Z_{\sigma}}=e^{q\abs{\Lambda}}\omega^G(\sigma)f(\sigma),
\end{equation}
and, consequently,
\begin{equation}
Z=\sum_{\sigma}e^{q\abs{\Lambda}}\omega^G(\sigma)f(\sigma).
\end{equation}

Note that the expressions of $\pi_{\Lambda}^G(\sigma),\pi_{\Lambda}(\sigma), \pi_{\Lambda}^G(f)$ remain unchanged.

We now compute $\left\langle \sigma_0 \right\rangle_{\Lambda}^+$ in the 
low temperature regime $J \gg 1$.

Clearly:
\begin{equation}
\label{sigmaprob}
\left\langle \sigma_0 \right\rangle_{\Lambda}^+ = (+1)\PplusBC(\sigma_0=+1)+(-1)\PplusBC(\sigma_0=-1)=1-2\PplusBC(\sigma_0=-1).
\end{equation}

We can now estimate $\PplusBC(\sigma_0=-1)$ using a contour representation, defining the 3d-Peierls contours as in the previous section.\\
Let $\Gamma(\sigma)=\{\gamma_1,\dots,\gamma_N\}$ be the family of 3d-Peierls contours associated to the spin configuration $\sigma$.

If $\sigma_0=-1$ in a given configuration $\sigma$, then 
there exists at least one polyhedron in $\Gamma(\sigma)$ that surrounds $\sigma_0$.

We use the notation $\gamma_0 \odot \{0\}$ to denote a polyhedron $\gamma_0$ that surrounds $\sigma_0$.\\
Moreover, given a particular Peierls contour $\gamma_0$, we denote with $A_{\gamma_0}$ the set of family of contours containing $\gamma_0$:
\begin{equation}
\label{Agamma}
A_{\gamma_0} = \{\Gamma: \gamma_0 \in \Gamma\}.
\end{equation}

Then:
\begin{equation}\label{prob1}
\PplusBC(\sigma_0=-1) \leq \PplusBC\left(\cup_{\gamma_0 \odot \{0\}}A_{\gamma_0} \right) \leq \sum_{\gamma_0 \odot \{0\}}\PplusBC(A_{\gamma_0}),
\end{equation}
and
\begin{equation}
\label{prob2}
\PplusBC(A_{\gamma_0})=\frac{\sum_{\Gamma:\gamma_0\in\Gamma} \frac{\overrightarrow{Z_{\sigma}}}{Z}}{\sum_{\Gamma} \frac{\overrightarrow{Z_{\sigma}}}{Z}}=\frac{\sum_{\Gamma:\gamma_0\in\Gamma} \omega^G(\Gamma(\sigma))f(\Gamma(\sigma))}{\sum_{\Gamma} \omega^G(\Gamma(\sigma))f(\Gamma(\sigma))}.
\end{equation}

Where $\omega^G(\Gamma(\sigma))$ and $f(\Gamma(\sigma))$ are $\omega^G(\sigma)$ and $f(\sigma)$ written in terms of Peierls contours.

We now see how $\omega^G(\sigma)$ and $f(\sigma)$ 
can be written 
in terms of Peierls contours.

\begin{lemma}
\label{omegagamma}
\begin{equation}
\label{omegagammaeq}
\omega^G(\Gamma(\sigma))=e^{3JL^2(L+1)}e^{-2J|\Gamma|},
\end{equation}
where $\abs{\Gamma}$ is the total length of all Peierls contours in $\Gamma(\sigma)$:
\begin{equation}
\label{absgamma}
|\Gamma|=\sum_i |\gamma_i|.
\end{equation}
\end{lemma}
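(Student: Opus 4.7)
The plan is to rewrite $H^+(\sigma)$ in terms of the Peierls contours and then exponentiate. Recall that
\begin{equation*}
\omega^G(\sigma) = e^{-H^+(\sigma)} = \exp\Bigl(J\sum_{(x,y)\in B_\Lambda^+} \sigma_x\sigma_y\Bigr),
\end{equation*}
so everything reduces to evaluating $\sum_{(x,y)\in B_\Lambda^+} \sigma_x\sigma_y$. For any pair $(x,y)\in B_\Lambda^+$ the product $\sigma_x\sigma_y$ equals $+1$ if the two neighbouring spins are aligned and $-1$ otherwise. Splitting the bond set into the aligned bonds and the $-1$-bonds (of which there are $B^-$ in total), and using the identity $B^- = |\Gamma|$ established earlier in \eqref{totalcontours}, one obtains
\begin{equation*}
\sum_{(x,y)\in B_\Lambda^+}\sigma_x\sigma_y \;=\; (B_{TOT}^+ - B^-) - B^- \;=\; B_{TOT}^+ - 2|\Gamma|,
\end{equation*}
exactly as in the derivation of \eqref{HPeierls}, the only difference being that $B_{TOT}$ is replaced by $B_{TOT}^+$ because the bond set now includes the edges joining $\Lambda$ to $\partial^{ext}\Lambda$.

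The key computation is therefore the counting of $B_{TOT}^+$, the total number of nearest-neighbour bonds in $\Lambda\cup \partial^{ext}\Lambda$ that have at least one endpoint in $\Lambda$. Inside the $L\times L\times L$ cube $\Lambda$ itself there are, for each of the three coordinate directions, $L^2(L-1)$ bonds, giving $3L^2(L-1)$ internal bonds. Each of the six external faces of $\Lambda$ contributes a further $L^2$ bonds joining boundary sites of $\Lambda$ to the adjacent layer of $\partial^{ext}\Lambda$, for a total of $6L^2$ boundary bonds. Summing,
\begin{equation*}
B_{TOT}^+ \;=\; 3L^2(L-1)+6L^2 \;=\; 3L^2(L+1).
\end{equation*}

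Plugging this back gives
\begin{equation*}
\omega^G(\sigma) \;=\; \exp\bigl(J(B_{TOT}^+ - 2|\Gamma|)\bigr) \;=\; e^{3JL^2(L+1)}\, e^{-2J|\Gamma|},
\end{equation*}
which since $|\Gamma|$ depends on $\sigma$ only through $\Gamma(\sigma)$, can be regarded as a function of the contour family $\Gamma(\sigma)$ and yields \eqref{omegagammaeq}. The only delicate point in the argument is handling the $+$ boundary condition correctly in the bond count, so that the bonds connecting $\Lambda$ to $\partial^{ext}\Lambda$ are included (they contribute to $B_{TOT}^+$ and also possibly to $|\Gamma|$ when a contour separates an interior $-$ spin from a boundary $+$ spin), but no bond entirely inside $\partial^{ext}\Lambda$ is counted; the arithmetic $3L^2(L-1)+6L^2=3L^2(L+1)$ is the concrete check that accounts for this.
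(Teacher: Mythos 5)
Your proof is correct and follows essentially the same route as the paper: write $\omega^G(\sigma)=e^{J(B^+-B^-)}$, use $B^-=|\Gamma|$ and $B^++B^-=B_{TOT}^+$ to get $e^{J(B_{TOT}^+-2|\Gamma|)}$, and count $B_{TOT}^+=3L^2(L+1)$. The only difference is that you spell out the bond count $3L^2(L-1)+6L^2$ explicitly, which the paper leaves as ``easy to show.''
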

\begin{proof}
Let $B^+$ be the number of bonds in of positive sign $B^+_{\Lambda}$  (the number of nearest neighbors with same sign) and with $B^-$ the number of bonds of negative sign (the number of nearest neighbors with opposite sign). Then
\begin{equation}
    \omega^G(\Gamma(\sigma))=e^{-H(\sigma)}=e^{\sum_{(x,y)\in B_{\Lambda}^+} J\sigma_x\sigma_y}=e^{J(B^{+}-B^{-})}.
\end{equation}

Now, let $ \{ \gamma_1, \gamma_2, \dots, \gamma_N\}$ be a contours configuration associate to the spin configuration $\sigma$. Then, by construction:
\begin{equation}
\abs{\Gamma}=\sum_{i=1}^N \abs{\gamma_i} = B^-
\end{equation}
as above.
Moreover: $B^{+}+B^{-}=B_{TOT}$.

Then:
\begin{equation}
B^{+}=B_{TOT}-B^{-}=B_{TOT}-|\Gamma|,
\end{equation}
and
so:
\begin{equation}
\omega^G(\Gamma(\sigma))=e^{J(B^{+}-B^{-})}=e^{J(B_{TOT}-2|\Gamma|)}=e^{JB_{TOT}}e^{-2J|\Gamma|}.
\end{equation}
Finally, is easy to show that $B_{TOT}=3L^2(L+1).$
Then, we have:
\begin{equation}
\omega^G(\Gamma(\sigma))=e^{3JL^2(L+1)}e^{-2J|\Gamma|}.
\end{equation}
\end{proof}

Lemma~\ref{omegagamma} describes how $\omega^G(\sigma)$
can be written in terms of Peierls contours. 
Similarly, also $f(\sigma)$
can be written
in terms of Peierls contours. To this end, recalling \eqref{eqf}, 
\eqref{xi}, \eqref{xigamma}), we have:
\begin{equation}
\label{fgamma}
f(\Gamma)=\left(1+\delta e^{-6J}\right)^{|\Lambda|}\left(\frac{1+\delta e^{-2J}}{1+\delta e^{-6J}}\right)^{|N_2|}\left(\frac{1+\delta e^{2J}}{1+\delta e^{-6J}}\right)^{|N_1|}\left(\frac{1+\delta e^{6J}}{1+\delta e^{-6J}}\right)^{|N_0|},
\end{equation}
with:
\begin{itemize}[label=\adfbullet{43}]
\item $N_3=\{x\in\Lambda: \sigma_x^d+\sigma_x^l+\sigma_x^b=3\sigma_x\}=\{x\in\Lambda: h_x(\sigma)=3\sigma_x\}.$
\item $N_2=\{x\in\Lambda: \sigma_x^d+\sigma_x^l+\sigma_x^b=\sigma_x\}=\{x\in\Lambda: h_x(\sigma)=\sigma_x\}.$
\item $N_1=\{x\in\Lambda: \sigma_x^d+\sigma_x^l+\sigma_x^b=-\sigma_x\}=\{x\in\Lambda: h_x(\sigma)=-\sigma_x\}.$
\item $N_0=\{x\in\Lambda: \sigma_x^d+\sigma_x^l+\sigma_x^b=-3\sigma_x\}=\{x\in\Lambda: h_x(\sigma)=-3\sigma_x\}.$
\end{itemize}

We can now compute $\PplusBC(\sigma_0=-1)$ using equations (\ref{prob1}), (\ref{prob2}), (\ref{omegagammaeq}) and (\ref{fgamma})
and noting that the terms 
$e^{3JL^2(L+1)}$ and
$\left(1+\delta e^{-6J}\right)^{|\Lambda|}$
appear both in the numerator and in the denominator as factors for all
$\Gamma$:

\begin{equation}
\begin{aligned}
    & \PplusBC(\sigma_0=-1)
    \leq \sum_{\gamma_0 \odot \{0\}}\frac{\sum_{\Gamma:\gamma_0\in\Gamma} \omega^G(\Gamma(\sigma))f(\Gamma(\sigma))}{\sum_{\Gamma} \omega^G(\Gamma(\sigma))f(\Gamma(\sigma))}\\ 
        & = \sum_{\gamma_0 \odot \{0\}}
            \frac{
                \sum_{\Gamma:\gamma_0\in\Gamma} 
                    e^{-2J\sum_{i=0}^N |\gamma_i|} 
                    \left(\frac{1+\delta e^{-2J}}{1+\delta e^{-6J}}\right)^{|N_2|}
                    \left(\frac{1+\delta e^{2J}}{1+\delta e^{-6J}}\right)^{|N_1|}
                    \left(\frac{1+\delta e^{6J}}{1+\delta e^{-6J}}\right)^{|N_0|}
        }   {
                \sum_{\Gamma} 
                    e^{-2J\sum_{i=1}^N |\gamma_i|} 
                    \left(\frac{1+\delta e^{-2J}}{1+\delta e^{-6J}}\right)^{|N_2|}
                    \left(\frac{1+\delta e^{2J}}{1+\delta e^{-6J}}\right)^{|N_1|}
                    \left(\frac{1+\delta e^{6J}}{1+\delta e^{-6J}}\right)^{|N_0|}
        }.
\end{aligned}
\end{equation}

Since the denominator contains more terms than the numerator,
it is a standard task (Peierls' argument) to show that
\begin{equation}
    \PplusBC(\sigma_0=-1) 
        \leq \sum_{\mathclap{\gamma_0 \odot \{0\}}}
            e^{-2J|\gamma_0|} 
            \left(\frac{1+\delta e^{-2J}}{1+\delta e^{-6J}}\right)^{|N_2(\gamma_0)|}
            \left(\frac{1+\delta e^{2J}}{1+\delta e^{-6J}}\right)^{|N_1(\gamma_0)|}
            \left(\frac{1+\delta e^{6J}}{1+\delta e^{-6J}}\right)^{|N_0(\gamma_0)|}.
\end{equation}

Since $1+\delta e^{-6J}\geq 1$, then 
$\frac{1+\delta e^{-2J}}{1+\delta e^{-6J}}<1+\delta e^{-2J}$,
$\frac{1+\delta e^{2J}}{1+\delta e^{-6J}}<1+\delta e^{2J}$ 
and
$\frac{1+\delta e^{6J}}{1+\delta e^{-6J}}<1+\delta e^{6J}$.
Therefore
\begin{equation}
    \PplusBC(\sigma_0=-1)\leq 
        \sum_{\gamma_0 \odot \{0\}} e^{-2J|\gamma_0|} 
        \left(1+\delta e^{-2J}\right)^{|N_2(\gamma_0)|}
        \left(1+\delta e^{2J}\right)^{|N_1(\gamma_0)|}
        \left(1+\delta e^{6J}\right)^{|N_0(\gamma_0)|}.
\end{equation} 
    
Now observe that
\begin{equation}
\label{obsgamma}
|\gamma_0|=3|N_0(\gamma_0)|+2|N_1(\gamma_0)|+1|N_2(\gamma_0)|.
\end{equation}
Indeed each site in $N_0(\gamma_0)$ contributes to the length of $\gamma_0$ with $3$ unitary plates; each site in $N_1(\gamma_0)$ contributes with two plates and each site in $N_2(\gamma_0)$ contributes with one plate.\\
Then, a simple algebraic computation leads to:
\begin{equation}\label{prob5eq}
    \PplusBC(\sigma_0=-1)
    \leq \sum_{\gamma_0 \odot \{0\}} 
        \left(e^{-2J}+\delta e^{-4J}\right)^{|N_2(\gamma_0)|}
        \left(e^{-4J}+\delta e^{-2J}\right)^{|N_1(\gamma_0)|}
        \left(e^{-6J}+\delta\right)^{|N_0(\gamma_0)|}.
\end{equation}

Let us proceed now by transforming the sum on the contours of Peierls $\gamma_0 \odot \{0\}$ in a sum over their lengths: $|\gamma_0|=k.$\\
We indicate with $\eta_0(k)$ the number of Peierls contours of length $k$ that surrounds the site $0$. Moreover $k\geq 6$ for closed contours.\\
Then, we can write:
\begin{equation}
\label{prob6eq}
\PplusBC(\sigma_0=-1) \leq \sum_{k=6}^{\infty} \eta_0(k)D^k,
\end{equation}
with
\begin{equation}\label{D}
    D=\max\set*{
        \left(e^{-2J}+\delta e^{-4J}\right);\,
        \left(e^{-4J}+\delta e^{-2J}\right)^{1/2};\,
        \left(e^{-6J}+\delta\right)^{1/3}
        },
\end{equation}
indeed, thanks to (\ref{obsgamma}), each factor of the form $\left(e^{-2J}+\delta e^{-4J}\right)$ contributes with exponent $1$ to the length of $\gamma_0$, each factor of the form $\left(e^{-4J}+\delta e^{-2J}\right)$ contributes with exponent $1/2$ and each factor of the form $\left(e^{-6J}+\delta\right)$ contributes with exponent $1/3.$

For $J$ large enough (low temperature regime):
\begin{align}
    D 
        & =\max\set*{
            \left(e^{-2J}+\delta e^{-4J}\right);\,
            \left(e^{-4J}+\delta e^{-2J}\right)^{1/2};\,
            \left(e^{-6J}+\delta\right)^{1/3}
            }
        \label{D2}
        = \left(e^{-6J}+\delta\right)^{1/3}.
\end{align}

Finally, an estimate of the number of contours of length $k$ surrounding the origin is given by Ruelle's lemma: $\eta_0(k)\leq 3^k$.

Then (\ref{prob6eq}) becomes:
\begin{equation}
\label{prob7eq}
\PplusBC(\sigma_0=-1) \leq \sum_{k=6}^{\infty} 3^k\left(e^{-6J}+\delta\right)^{k/3}=\sum_{k=6}^{\infty} \left[3\left(e^{-6J}+e^{-2q}\right)^{1/3}\right]^k.
\end{equation}

This expression identifies a geometric series with common ratio $r=3\left(e^{-6J}+e^{-2q}\right)^{1/3}$.

The series is convergent if $r<1$, that is: $\left(e^{-6J}+e^{-2q}\right)^{1/3}<\frac{1}{3}$.

Under this conditions, the series converges to the value: $\sum_{k=6}^{\infty} r^k=\frac{r^6}{1-r}$.

Then (\ref{prob7eq}) becomes: $\PplusBC(\sigma_0=-1) \leq \frac{3^6\left(e^{-6J}+e^{-2q}\right)^2}{1-3\left(e^{-6J}+e^{-2q}\right)^{1/3}}$.

This expression tends toward zero for $J\gg 1$ and $q\gg 1$ uniformely in $\Lambda$.\\
Therefore, for $J$ and $q$ large enough:
\begin{equation}
\label{eqP+}
\PplusBC(\sigma_0=-1) < 1/2.
\end{equation}
This happens if the following condition holds:
\begin{equation}
\label{condtrans}
\frac{3^6\left(e^{-6J}+e^{-2q}\right)^2}{1-3\left(e^{-6J}+e^{-2q}\right)^{1/3}}<\frac{1}{2}.
\end{equation}

Finally, if (\ref{condtrans}) holds, recalling (\ref{sigmaprob}) and (\ref{eqP+}), we have:
\begin{equation}
\label{sigma+}
\left\langle \sigma_0 \right\rangle_{\Lambda}^+ >0.
\end{equation}

Moreover, in the same manner, we can show that, under the condition (\ref{condtrans}) we have:
\begin{equation}\label{eqP-}
    \PminusBC(\sigma_0=-1) > 1/2,
\end{equation}
and then
\begin{equation}\label{sigma-}
    \left\langle \sigma_0 \right\rangle_{\Lambda}^- <0.
\end{equation}

From (\ref{sigma+}) and (\ref{sigma-}), we have:
\begin{equation}\label{sigmadiffer}
    \left\langle \sigma_0 \right\rangle_{\Lambda}^+ \neq
    \left\langle \sigma_0 \right\rangle_{\Lambda}^-.
\end{equation}

And this inequality holds even in the thermodynamic
 limit $\Lambda \to \infty$:
\begin{equation}
\label{sigmadifferinfty}
\left\langle \sigma_0 \right\rangle^+ \neq \left\langle \sigma_0 \right\rangle^-.
\end{equation}

So, the 1-point correlation functions are not unique, in the low-temperature regime, with respect to boundary conditions. This suffice to asserts that the equilibrium state, that is the family of all $n$-points correlation functions in the thermodynamic limit (with $n=1,2,...,\abs{\Lambda}$), is not unique at low temperature, but it depends on boundary conditions.

\subsection{Proof of Theorem~\ref{thm:high_temp_regime}}
\begin{proof}
We start by rewriting the Hamiltonian \eqref{hamiltonian_no_field}:
\begin{equation}
\label{H_low_temp}
H(\sigma,\tau)=-\sum_{x\in\Lambda} [J\tau_x(\sigma_x^d+\sigma_x^l+\sigma_x^b)+q\tau_x\sigma_x]=-\sum_{e \in E_{\altBox}} J_e \sigma_{e_1} \tau_{e_2},
\end{equation}
where $E_{\altBox}=E_J \cup E_q$ is the edge set of the finite tetrahedral lattice $\altBox$ with periodic boundary
conditions and $e_1,e_2$ are two sites in $\altBox$ linked by the edge $e$.
Moreover:
\begin{equation}
J_e=
\begin{cases}
J \text{ \ \ if \ \ } e \in E_J\\
q \text{ \ \ if \ \ } e \in E_q.
\end{cases}
\end{equation}
Setting $\vec{\sigma} = (\sigma,\tau) \in \chi^2 \text{ and } b_e=\sigma_{e_1} \tau_{e_2} \in \{-1,1\}$, we can write:
\begin{equation}
\label{H_good}
H(\sigma,\tau)=H(\vec{\sigma})=-\sum_{e \in E_{\altBox}} J_e \sigma_{e_1} \tau_{e_2}=-\sum_{e \in E_{\altBox}} J_e b_e.
\end{equation}
Thus, the partition function is:
\begin{equation}
\label{partifuncthigh}
Z_{\altBox} = \sum_{\vec{\sigma}} e^{-H(\vec{\sigma})} =\sum_{\vec{\sigma}} e^{\sum_{e \in E_{\altBox}} J_e b_e} = \sum_{\vec{\sigma}} \prod_{e \in E_{\altBox}} e^{J_e b_e}.
\end{equation}
Write $
e^{J_e b_e}  
= \cosh(J_e) [ 1 + b_e\tanh(J_e)].
$
Then, the partition function can be rewritten as:
\begin{equation}
\label{partifuncthigh2}
Z_{\altBox} = [\cosh(J_e)]^{\abs{E_{\altBox}}} \sum_{\vec{\sigma}} \prod_{e \in E_{\altBox}} [ 1 + b_e\tanh(J_e)].
\end{equation}
Developing the product $\prod_{e \in E_{\altBox}} [ 1 + b_e\tanh(J_e)]$, we get terms of the type:
\begin{equation}
[\tanh(J_e)]^N b_{e_1} b_{e_2} \dots b_{e_N},
\end{equation}
which has a clear geometric interpretation: the set of bonds $b_{e_1} \dots b_{e_N}$ form a graph (connected or not) in $\altBox$ whose links are nearest neighbors. Performing the sum over $\vec{\sigma}$ we get that the only graphs which yield a non vanishing contribution to
$\sum_{\vec{\sigma}} b_{e_1} \dots b_{e_N}$, and hence to the partition function, are those
whose vertices have incidence number two or four, while all other graphs are zero once the sum over configurations $\vec{\sigma}$ has been done. Graphs of this type are called non-vanishing.\\
If the graph $b_{e_1} \dots b_{e_N}$ is non-vanishing, then:
\begin{equation}
    \sum_{\vec{\sigma}} b_{e_1} \dots b_{e_N} = 2^{\abs{\altBox}}.
\end{equation}
We can naturally split a non vanishing graph into non intersecting connected components which we will call lattice animals. A lattice animal $\gamma$ is thus nothing but a graph $g$ with edge set $E_{\gamma}= \{ e_1 \dots e_N \} \in E_{\altBox}$ formed by nearest neighbor links. The allowed lattice animals
are only those $\gamma$ with incidence number at the vertices equal to two or four. We denote with $\mathbb{L}_{\altBox}$ the set of all possible lattice animals in $\altBox$.\\
Two lattice animals $\gamma$ and $\gamma'$ are non overlapping (i.e. compatible), and we write $\gamma \sim \gamma'$ if and only if $\gamma \cap \gamma' = \emptyset$. We will denote shortly $\abs{\gamma} = \abs{E_{\gamma}}$. 
In conclusion we can write (from \ref{partifuncthigh2}):
\begin{equation}
\label{partufuncthigh3}
Z_{\altBox} = [\cosh(J_e)]^{\abs{E_{\altBox}}} 2^{\abs{\altBox}} \Xi_{\altBox}(J_e),
\end{equation}
where
\begin{equation}
\label{partpolim}
\Xi_{\altBox}(J_e) = 1 + \sum_{n \geq 1} \sum_{\substack{ \{ \gamma_1, \dots, \gamma_n \} \subset \mathbb{L}_{\altBox}, \\ \gamma_i \sim \gamma_j }} \xi(\gamma_1) \dots \xi(\gamma_n)
\end{equation}
is the partition function of a hard core polymer gas in which polymers are lattice animals, i.e. elements of $\mathbb{L}_{\altBox}$ with the incompatibility relation $\gamma \nsim \gamma'$.
Each polymer $\gamma$ has an activity given by:
\begin{equation}
\label{activitymk}
\xi(\gamma) = [\tanh(J)]^{2k}[\tanh(q)]^{2m},
\end{equation}
where $2k$ and $2m$ denote respectively the number of $J-$edges and the number of $q-$edges of the closed polymer $\gamma$, so $2k + 2m = \abs{\gamma}$. Actually, we can imagine the lattice $\altBox$ as a collection of layer of hexagonal lattice, the sites of which are linked only by $J-$edges, and these layers are linked to each other by edges of type $q$. So, a closed circuit must have an even number of $q-$segments and an even number of $J-$segments and the number of $q-$type segments must be less then or equal to half the number of $J-$type segments.
To control the analyticity of $Z_{\altBox}$, we can apply the Fernandez-Procacci convergence criterion (see \cite{procaccifernandez}) to the $\Xi_{\altBox}(J_e)$.\\
Namely, we need to find numbers $\mu(\gamma) \in (0,+\infty)$ such that:
\begin{equation}
\label{FP1}
\xi(\gamma) \leq \frac{\mu(\gamma)}{\Xi_{\mathbb{L}_{\gamma}}(\vec{\mu})},
\end{equation}
where
$
\mathbb{L}_{\gamma} = \{ \gamma' \in \mathbb{L}_{\altBox} : \gamma' \nsim \gamma \}
$
is the set of all polymers in $\altBox$ incompatible with $\gamma$ (i.e. the set of all polymers that intersect $\gamma$).\\
Setting $\mu(\gamma) = \xi(\gamma) e^{a\abs{\gamma}}$, the condition (\ref{FP1}) becomes
\begin{equation}
\label{FP2}
\Xi_{\mathbb{L}_{\gamma}}(\vec{\mu}) \leq e^{a\abs{\gamma}},
\end{equation}
where
\begin{align}
\label{Xibig}
    \Xi_{\mathbb{L}_{\gamma}}(\vec{\mu}) 
    & = 1 + \sum_{n=1}^{\abs{\gamma}} \frac{1}{n!} \sum_{\substack{(\gamma_1, \dots, \gamma_n) \subset \mathbb{L}_{\gamma}^n, \\ \gamma_i \sim \gamma_j }} \abs{\mu(\gamma_1) \dots \mu(\gamma_n)}\\
    & = 1 + \sum_{n=1}^{\abs{\gamma}} \frac{1}{n!} \sum_{\substack{(\gamma_1, \dots, \gamma_n) \subset \mathbb{L}_{\gamma}^n, \\ \gamma_i \sim \gamma_j }} \prod_{i=1}^n \abs{\xi(\gamma_i)} e^{a\abs{\gamma_i}},
\end{align}
with the sum $\sum_{\substack{(\gamma_1, \dots, \gamma_n) \subset \mathbb{L}_{\gamma}^n, \\ \gamma_i \sim \gamma_j }} (\cdot)$ running over all ordered $n$-tuples of polymers.\\
Consider now the factor:
\begin{equation}
\sum_{\substack{(\gamma_1, \dots, \gamma_n) \subset \mathbb{L}_{\gamma}^n, \\ \gamma_i \sim \gamma_j }} \prod_{i=1}^n \abs{\xi(\gamma_i)} e^{a\abs{\gamma_i}},
\end{equation}
we have thus to choose $n$ lattice animals $(\gamma_1, \dots, \gamma_n)$ all incompatible with a given lattice animal $\gamma$ and all pairwise compatible. We recall that two lattice animals are incompatible if they share a vertex in $\altBox$ . Since $\gamma$ has $|\gamma|$ edges, we can find at most $|\gamma|$ animals incompatibles with $\gamma$ and pairwise compatible. Thus, the factor above is zero whenever $ n > |\gamma|, \gamma_i \nsim \gamma$. \\
We want to rearrange the sum over all ordered $n$-tuples of polymers $(\gamma_1, \dots, \gamma_n) \subset \mathbb{L}_{\gamma}^n$ in a sum over all polymers $\{ \gamma_1, \dots, \gamma_n \} \subset \mathbb{L}_{\altBox}$, regardless of their order. Recalling that the factor above is zero whenever $n > |\gamma|$, we have:
\begin{align}
\begin{aligned}
    & \sum_{\substack{(\gamma_1, \dots, \gamma_n) \subset 
          \mathbb{L}_{\gamma}^n, \\ \gamma_i \sim \gamma_j }
    } \prod_{i=1}^n \abs{\xi(\gamma_i)} e^{a\abs{\gamma_i}}\\ 
    & = (|\gamma|)(|\gamma|-1)\dots(|\gamma|-n+1) \sum_{\gamma_1 \in \mathbb{L}_{\altBox}} \abs{\xi(\gamma_1)} e^{a\abs{\gamma_1}}
    \dots \sum_{\gamma_n \in \mathbb{L}_{\altBox}} \abs{\xi(\gamma_n)} e^{a\abs{\gamma_n}} \\
    & \leq (|\gamma|)(|\gamma|-1)\dots(|\gamma|-n+1) 
    \parens[\bigg]{\sup_{x \in \altBox} \sum_{x \in \gamma \in \mathbb{L}_{\altBox}} \abs{\xi(\gamma)} e^{a\abs{\gamma}} 
    }^n\\
    & = \binom{|\gamma|}{n} n! 
    \parens[\bigg]{ \sup_{x \in \altBox} \sum_{x \in \gamma \in \mathbb{L}_{\altBox}} \abs{\xi(\gamma)} e^{a\abs{\gamma}} 
    }^n.
\end{aligned}
\end{align}
    
Then, (\ref{Xibig}) becomes:
\begin{align}
\begin{aligned}
\label{Xibig2}
    \Xi_{\mathbb{L}_{\gamma}}(\vec{\mu}) 
    & = 1 + \sum_{n=1}^{\abs{\gamma}} {\frac{1}{n!}} \binom{|\gamma|}{n} {n!} 
    \parens[\bigg]{ \sup_{x \in \altBox} \sum_{x \in \gamma \in \mathbb{L}_{\altBox}} \abs{\xi(\gamma)} e^{a\abs{\gamma}}
    }^n\\
    & =  \parens[\bigg]{ 1 + \sup_{x \in \altBox} \sum_{x \in \gamma \in \mathbb{L}_{\altBox}} \abs{\xi(\gamma)} e^{a\abs{\gamma}} 
    }^{|\gamma|}.
\end{aligned}    
\end{align}
The convergence condition (\ref{FP2}) becomes:
\begin{equation}
\label{FP3}
 \parens[\bigg]{ 1 + \sup_{x \in \altBox} 
            \sum_{x \in \gamma \in \mathbb{L}_{\altBox}} 
            \abs{\xi(\gamma)} e^{a\abs{\gamma}} 
 }^{|\gamma|} \leq e^{a\abs{\gamma}}.
\end{equation}
and hence
\begin{equation}
\label{FP4}
\sup_{x \in \altBox} \sum_{x \in \gamma \in \mathbb{L}_{\altBox}} \abs{\xi(\gamma)} e^{a\abs{\gamma}} \leq e^a - 1.
\end{equation}
Observe finally that, due to the structure of the lattice, the function
\begin{equation}
f(x) = \sum_{x \in \gamma \in \mathbb{L}_{\altBox}} \abs{\xi(\gamma)} e^{a\abs{\gamma}}
\end{equation}
is constant as $x$ varies in $\altBox$. Therefore (\ref{FP4}) is equivalent to the condition
\begin{equation}
\sum_{0 \in \gamma \in \mathbb{L}_{\altBox}} \abs{\xi(\gamma)} e^{a\abs{\gamma}} \leq e^a - 1.
\end{equation}
where 0 is the ``origin'' in $\altBox$.
Now, recalling (\ref{activitymk}), the above condition becomes:
\begin{equation}
\label{FP5}
\sum_{0 \in \gamma \in \mathbb{L}_{\altBox}} \abs*{[\tanh(J)]^{2k}[\tanh(q)]^{2m} e^{a(2k+2m)}} \leq e^a - 1.
\end{equation}
We want to convert the sum over $\gamma$'s passing through $0$ in a sum over their lengths $\abs{\gamma}=2k+2m$. To this end, we observe that in a closed circuit, the number $q-$segments must be less then or equal to the number of $J-$segments: $2m \leq 2k$; and the minimal number of edges must be $6$: $2k+2m \geq 6$.
So, condition (\ref{FP5}) becomes:
\begin{equation}
\label{FP6}
\sum_{k \geq 2} \sum_{\substack{ m = 0 \\ k+m \geq 3 }}^{k} \abs[\Big]{[\tanh(J)]^{2k}[\tanh(q)]^{2m} e^{a(2k+2m)} \sum_{\substack{0 \in \gamma \in \mathbb{L}_{\altBox} \\  \abs{\gamma}=2k+2m}} 1 }\leq e^a - 1.
\end{equation}
The sum:
\begin{equation}
\sum_{\substack{0 \in \gamma \in \mathbb{L}_{\altBox}  \\  \abs{\gamma}=2k+2m}} 1
\end{equation}
corresponds to the number of closed circuits of length $2k+2m$ passing through $0$.\\
We can find this number imagining to start from a certain point and doing $2k$ steps of $J-$type and $2m$ steps of $q-$type, until returning to the starting point. As long as we move on a layer we carry out all $J-$type steps, while when we change layer we carry out a $q-$type step. As long as we move on a layer, we have $2^{2k-2m}$ possible circuits (at each node, we have $2$ possible directions); however, occasionally we have to insert a change of layer (a segment of type $q$), for a total of $2m$ segments of this type. So we have to insert $2m$ step of type $q$ among the $2k$ steps of type $J$. We can do this in $\binom{2k}{2m}$ ways. Once we change layers, we have $3$ possible directions for the first step in this new layer, so we need to consider also a factor $3^{2m}$.
So, the total number of such circuits is:
\begin{equation}
\sum_{\substack{0 \in \gamma \in \mathbb{L}_{\altBox} \\  \abs{\gamma}=2k+2m}} 1 = 2^{2k-2m} 3^{2m} \binom{2k}{2m}.
\end{equation}
Hence, condition (\ref{FP6}) becomes:
\begin{equation}
\label{FP7}
\sum_{k \geq 2} \sum_{\substack{ m = 0 \\ k+m \geq 3 }}^{k} \abs*{[\tanh(J)]^{2k}[\tanh(q)]^{2m} e^{a(2k+2m)} 2^{2k} \left(\frac{3}{2}\right)^{2m} \binom{2k}{2m}} \leq e^a - 1.
\end{equation}
yielding
\begin{equation}
\label{FP8}
\sum_{\substack{ k \geq 2 \\ k+m \geq 3 }} \abs*{[2\tanh(J) e^a]^{2k} \sum_{m = 0}^{k} \binom{2k}{2m} \left[\left(\tfrac{3}{2}\right)\tanh(q) e^a\right]^{2m} }\leq e^a - 1.
\end{equation}
We observe that the two sums must satisfy the constraint $k+m \geq 3$, that is the close circuits condition. We can then extend these sums to all value $k \geq 2$, $q \geq 0$ without any constrain if we subtract by hand the only term forbidden by the constraint; this term corresponds to $k=2$ and $m=0$. So, we finally have:
\begin{equation}
\label{FP9}
\sum_{k \geq 2} \abs*{ [2\tanh(J) e^a]^{2k} \sum_{m =0}^{k} \binom{2k}{2m} \left[\left(\tfrac{3}{2}\right)\tanh(q) e^a\right]^{2m} - [2\tanh(J) e^a]^4} \leq e^a - 1.
\end{equation}
It is a standard task to show that:
\begin{equation}
\label{sumpari2}
\sum_{m=0}^k \binom{2k}{2m} x^{2m} = \frac{1}{2} \left( (x-1)^{2k} + (x+1)^{2k} \right),
\end{equation}
with $x = \left(\frac{3}{2}\right)\tanh(q) e^a$.
Further, setting $y=2\tanh{(J)}e^a$,  condition (\ref{FP9}) becomes:
\begin{equation}
\label{FP10}
\frac{1}{2} \sum_{k \geq 2} \abs*{ y^{2k} \left[ \left( x-1 \right)^{2k} + \left(x+1 \right)^{2k} \right] - y^4} \leq e^a - 1.
\end{equation}

We can finally perform the remaining sums over $k$:
\begin{equation}
\sum_{k \geq 2} [y(x\pm1)]^{2k}=
\sum_{k \geq 2} [y^2(x\pm1)^2]^k =
\frac{[y^2(x\pm1)^2]^2}{1-y^2(x\pm1)^2} = \frac{y^4(x\pm1)^4}{1-y^2(x\pm1)^2}
\end{equation}
yielding
\begin{equation}
\label{FP12}
\abs*{\frac{1}{2} \left[ \frac{y^4(x-1)^4}{1-y^2(x-1)^2} + \frac{y^4(x+1)^4}{1-y^2(x+1)^2} \right] -y^4} \leq e^a - 1.
\end{equation}
Finally, recalling the form of $x$ and $y$, we have:
\begin{tiny}
\begin{equation}\label{FP13}
    \abs*{\frac{1}{2} \left[ \frac{(2\tanh(J) e^a)^4(\frac{3}{2}\tanh(q) e^a-1)^4}{1-(2\tanh(J) e^a)^2(\frac{3}{2}\tanh(q) e^a-1)^2} + \frac{(2\tanh(J) e^a)^4(\frac{3}{2}\tanh(q) e^a+1)^4}{1-(2\tanh(J) e^a)^2(\frac{3}{2}\tanh(q) e^a+1)^2} \right] +
    (2\tanh(J) e^a)^4}\leq e^a - 1.
\end{equation}
\end{tiny}

This expression is the condition that $J$ and $q$ must satisfy to be sure that \linebreak
\mbox{$f_{\altBox}(J,q)=\frac{1}{\abs{\altBox}} \ln Z_{\altBox}(J,q)$} is an analytic function for in $J$ and $q$.
Numerical evaluations show that a good value of $a$ is $a=0.15$. For this value, the expression (\ref{FP13}) identifies the region 
below the lower curve in Fig.\ref{twocurves}.
Hence, for values of $J$ and $q$ small enough (i.e. in the aforementioned region), $
f_{\altBox}(J,q)$ is analytic.\\
\end{proof}

Thus, we have shown that in low-temperature regime the system is in the ordered phase, while in the high-temperature regime it is in the disordered one. Therefore there must be a critical line in $J-q$ plane that separates the ordered phase from the disordered one and this curve must lie in the region between the two curves as shown in Figure~\ref{twocurves}.
This fact is well supported by numerical simulations.

\begin{figure}
\centering
\includegraphics[width=0.8\textwidth]{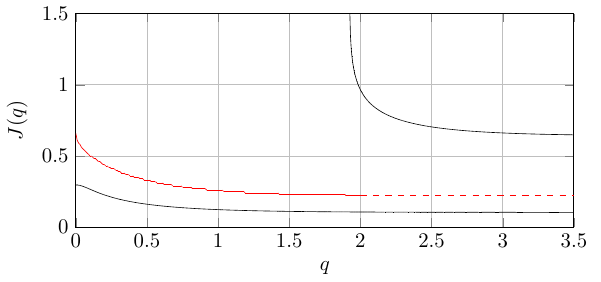}
\caption{The curves delimiting the Low Temperature and High Temperature regions identified in equations \eqref{condtrans} and \eqref{FP13} together with a numerical approximation (in red) of the
conjectured critical curve.}
\label{twocurves}
\end{figure}

\section{Numerical simulations}\label{sec:numerical_simulations}
In this section we present our numerical results (obtained with techniques similar to those used in \cite{dauparallel})
concerning the critical curve in the $(q,J)$ plane and 
discuss the behavior of the critical exponents
of the magnetic susceptibility
as $q$ varies.

To this end we consider the alternate dynamics taking
place on a class of ``tetrahedral'' non homogeneous lattices.
It is possible to give a geometric interpretation to the pair interaction
thinking to $J$ and $q$ to be proportional to the inverse of the
distance of the lattice points. In this way, the three dimensional lattice
can be thought to be a collection of honeycomb layers (where the 
the side of each hexagon is $\frac{1}{J}$) at distance
$\frac{1}{q}$ one from another.
With this picture in mind, we observe that when $q=J$ the dynamics
lives on the diamond lattice whereas for $q \to \infty$ the lattice
becomes the simple cubic one. Moving in the other direction (towards
smaller values of $q$) the interaction between the layers
becomes weaker and weaker up to the point ($q = 0$) when they become
independent so that the system resembles a collection of well separated graphene sheets.

In this framework, estimating the critical curve in the $(q,J)$ plane, 
amounts to finding the critical ``size'' $J$ of the hexagons for each
``distance'' $q$ between the sheets.
Moreover, it is reasonable to think that the transition from the three dimensional model to a collection of independent two dimensional
ones 
implies that the critical exponents of the magnetic susceptibility undergoes a sharp change at $q=0$.

Both the critical values of $J$ and the critical exponents can be
estimated by looking at the variance of the magnetization of the system.
Indeed, at the critical $J$ this variance diverges.

To estimate the variance of the magnetization, we considered its
sample variance computed over a long run for a somewhat large system, see below for more details. 
The critical values of $J$ that we found are
consistent with the actual critical value
of $J$ for the Ising model on the honeycomb lattice ($q=0$) and
with recent numerical estimates for the critical $J$ for the
diamond lattice ($q=J$) and the simple cubic lattice ($q$ large). This fact gives us some confidence
on our findings 
concerning the whole critical curve.

Our results are summarized in Figs.~\ref{twocurves},
\ref{fig:critical_temperatures}
and~\ref{fig:critical_curve}. In particular
Fig.~\ref{twocurves} shows that the estimated
critical curve lies in the region between
the curves of equations~\eqref{condtrans} and~\eqref{FP13}
delimiting the low and the high temperature region respectively.
Fig.~\ref{fig:critical_temperatures} shows the values of the 
normalized standard deviation of the magnetization as a 
function of $J$ for $q=0$, $q=J$ and $q=2$ corresponding, respectively,
to the
collection of two dimensional honeycomb lattices, the diamond lattice and,
ideally, the simple cubic lattice. Our estimate of the critical $J$ for each value of  $q$,
denoted by $\hat{J}_c(q)$, is given
by the value at which the variance is maximal.
We have $\hat{J}_c(0) = 0.659$. In this case the analytical 
critical value is $J_c^\mathrm{h.l.} \approx 0.659$ (see \cite{shaken2d})
For $q = J$ we obtained $\hat{J}_c(J) = 0.370$ whereas in
\cite{diamond} the numerical estimate is $J_c^\mathrm{d.l.} = 0.370$. Finally,
setting $q = 2$ we estimated 
$\hat{J}_c(2) = 0.226$.
In \cite{3d} $J_c^\mathrm{s.c.} = 0.222$

\begin{figure}
\centering
\includegraphics[width=0.7\textwidth]{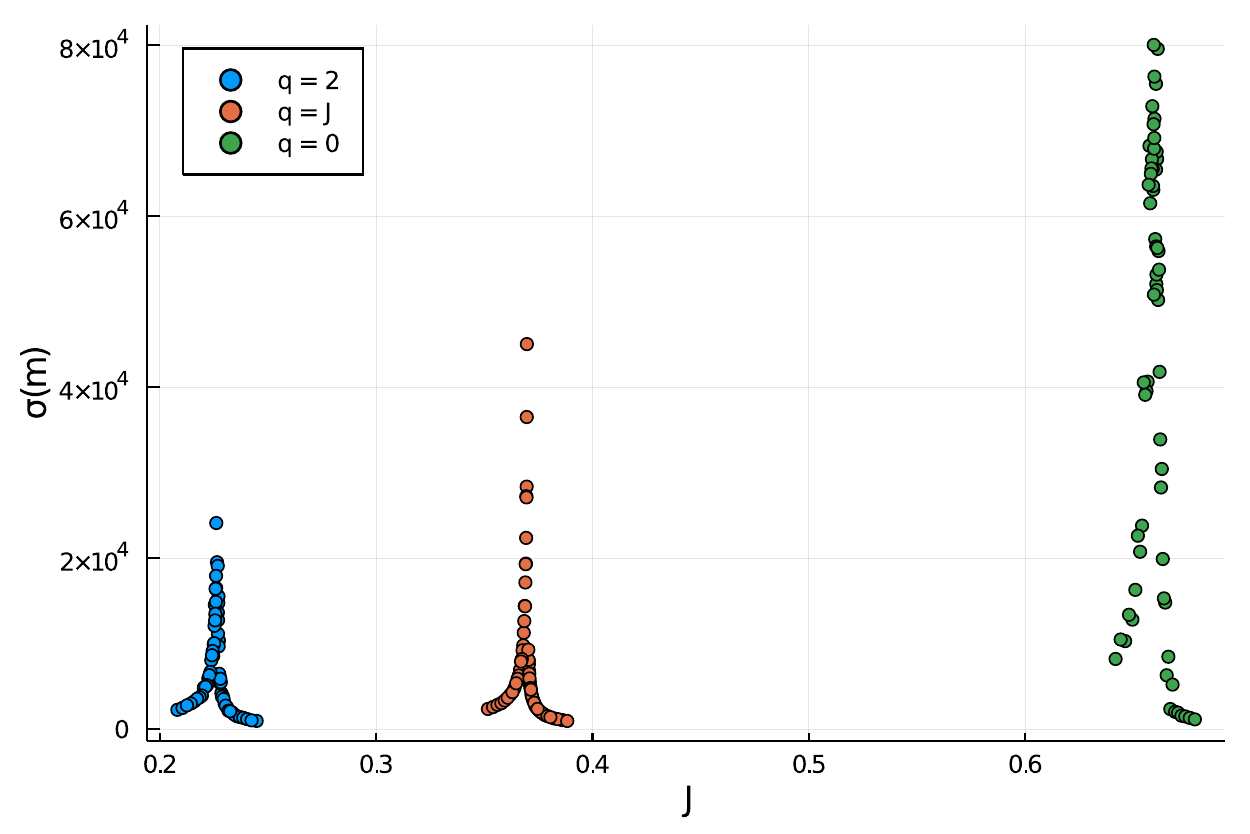}
\caption{The standard deviation of the magnetization as a function of $J$ for $q = 2$ (ideally the limit $q \to \infty$), $q = J$ and $q = 0$. Our estimates for the critical temperature are $J_c = 0.226$ for the cubic lattice, $J_c = 0.370$ for the tetrahedral diamond lattice and $J_c = 0.669$ for the 2d honeycomb lattice.}
\label{fig:critical_temperatures}
\end{figure}

\begin{figure}
\centering
    \includegraphics[width=0.45\textwidth]{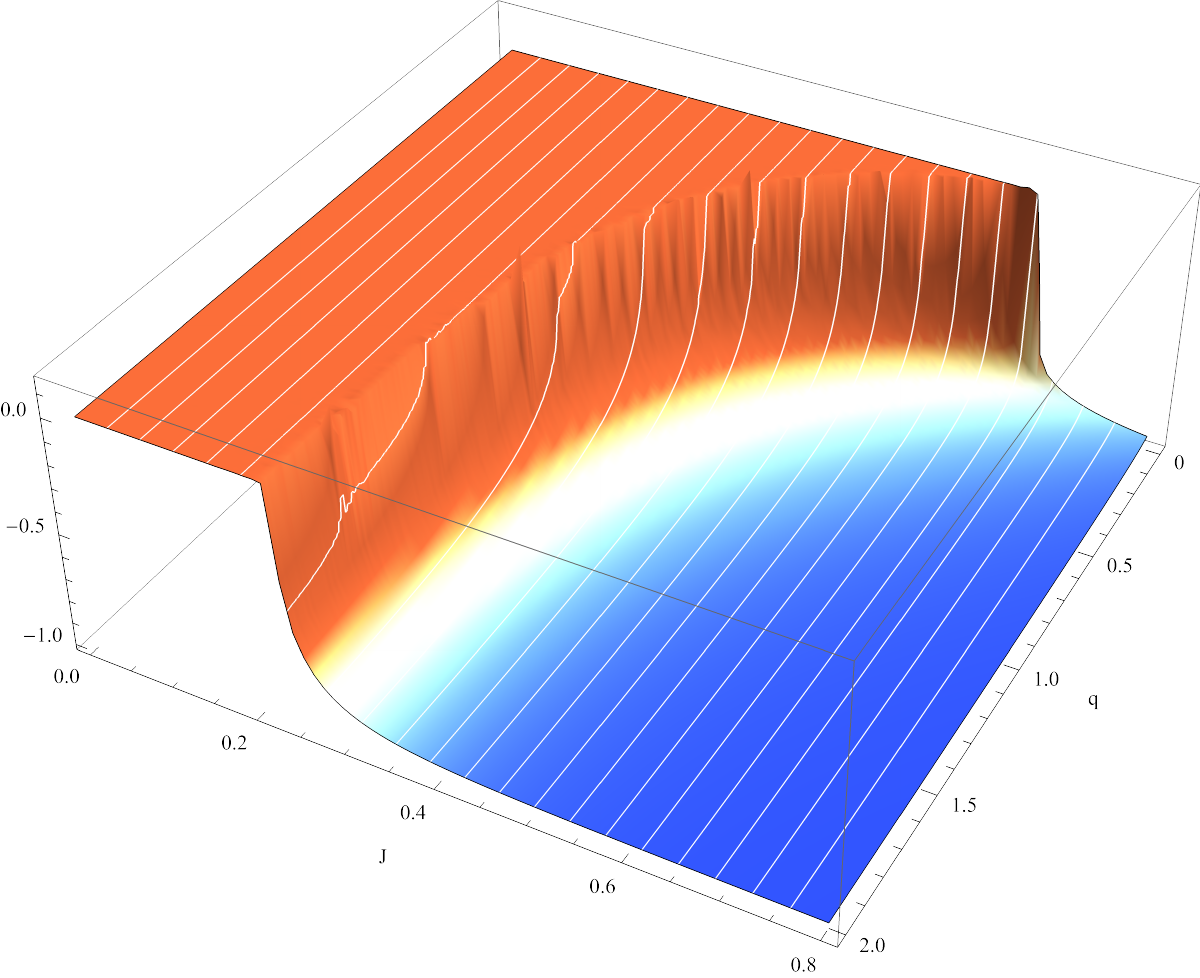}
    \quad
    \includegraphics[width=0.45\textwidth]{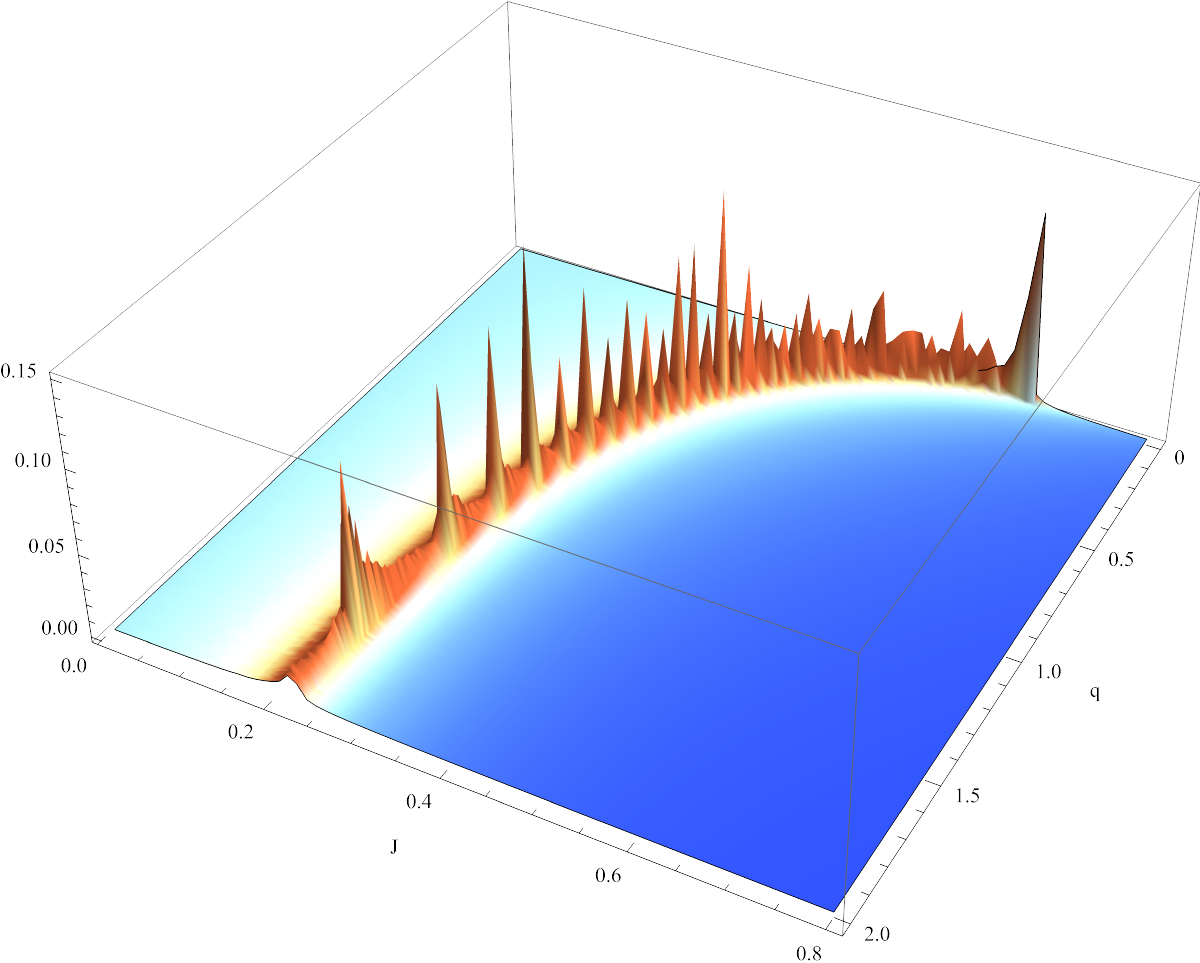}
\caption{The ``critical behavior'' of the mean magnetization (left) and the standard deviation (right) for the tested values of $J$ and $q$}
\label{fig:critical_curve}
\end{figure}

Approaching the critical temperature, the magnetic susceptibility
$\chi$
(the variance of the magnetization) diverges with a power law
with some \emph{critical exponent} $\gamma$
(see, e.g., \cite{macritical, Thompson})
\begin{align}
    \chi = \left(
           \frac{\abs{T - T_c}}{T_c}
           \right)^{-\gamma}.
\end{align}

Recalling that, in this paper we wrote
$J$ in place of the usual $\beta J$, 
where $\beta = \frac{1}{k_b T}$,
we get
\begin{align}
    \gamma = -\frac{\log(\chi)}
                   {\log(\abs{J - Jc})}
             + C
\end{align}
for a suitable constant $C$.

Note that the value of $\gamma$ is 
related to the dimension
of the system and it is the same for the whole class of Ising-like
lattice systems with the same dimension (see \cite{macritical}).
We estimated $\gamma$ for several values of $q$ ranging from
$0$ to $2$, that is for geometries ranging from a collection
of $2d$ honeycomb lattices to the simple cubic lattice.
In this case, our estimates are not meant to 
determine the values of the critical exponent with high accuracy.
Rather, as long as our values are consistent with those available
in the literature, we want use them
to support our conjecture that the systems
retains a three dimensional structure for all positive $q$s.

Our results are summarized in Fig.~\ref{fig:slope_critical_exponents}
and Fig.~\ref{fig:critical_coefficients}.
There it is possible to see that, for $q > 0$ 
both the high and low temperature critical exponents are quite
close to the value $\gamma \approx 1.237$ that is
the critical value for three dimensional Ising systems
(see \cite{25orderGamma}). 
On the other hand, 
as soon as $q=0$, our estimate jumps to a value that
is much closer to the critical value for two dimensional Ising
system ($\gamma=\frac{7}{4}$, see \cite{Thompson}).

These findings show that our model is able to capture
through the variation of the parameter $q$ the
dimensional transition
in the geometry of the system.

\begin{figure}
    \centering
    \includegraphics[width=0.4\textwidth]
                    {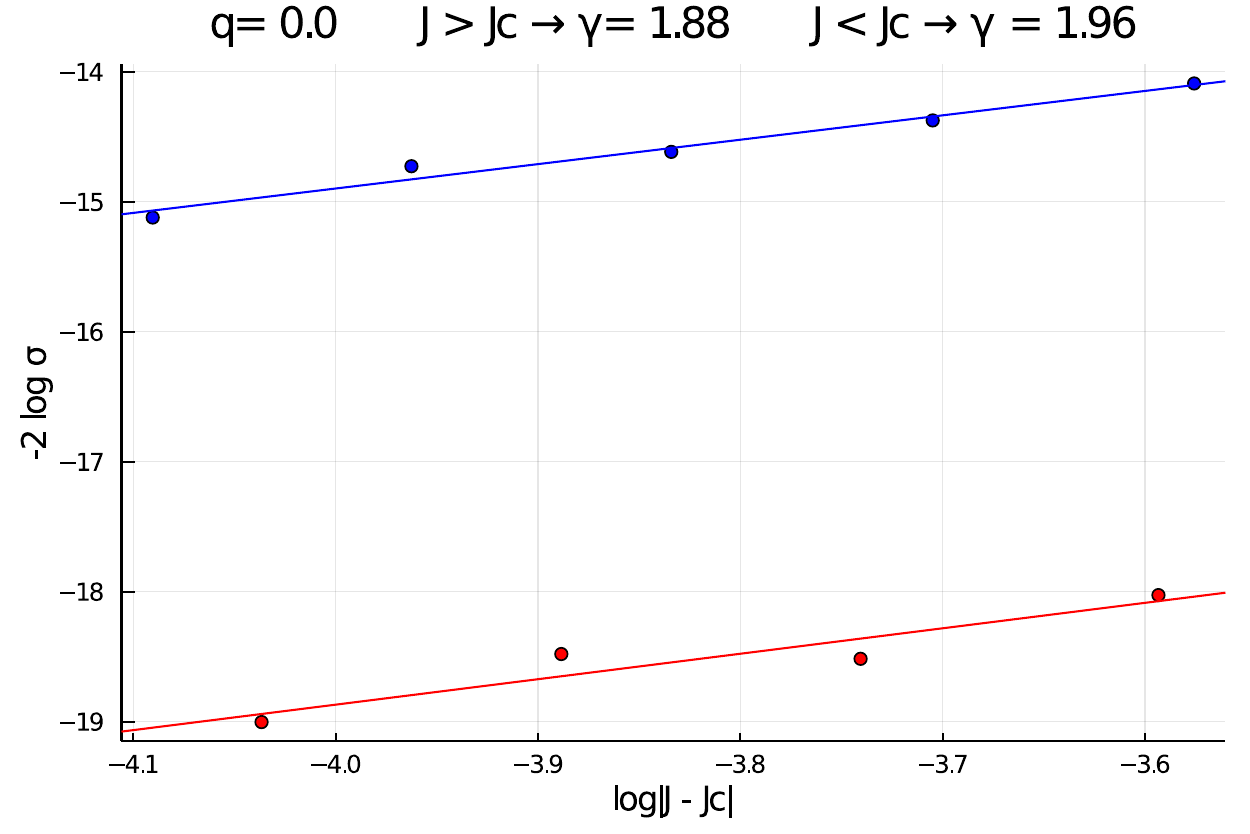}
                    \hspace{0.1\textwidth}
    \includegraphics[width=0.4\textwidth]
                    {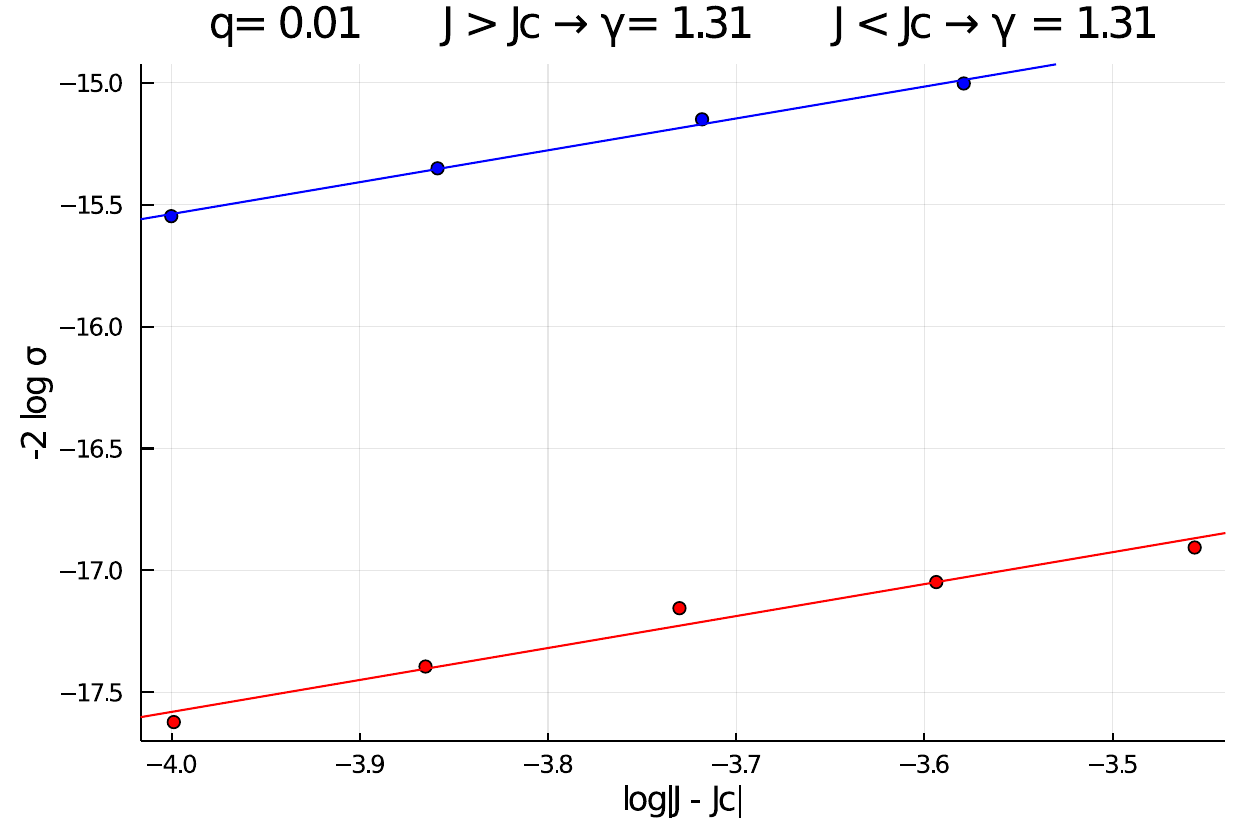}\\
    
    \vspace{0.5cm}
                    
    \includegraphics[width=0.4\textwidth]
                    {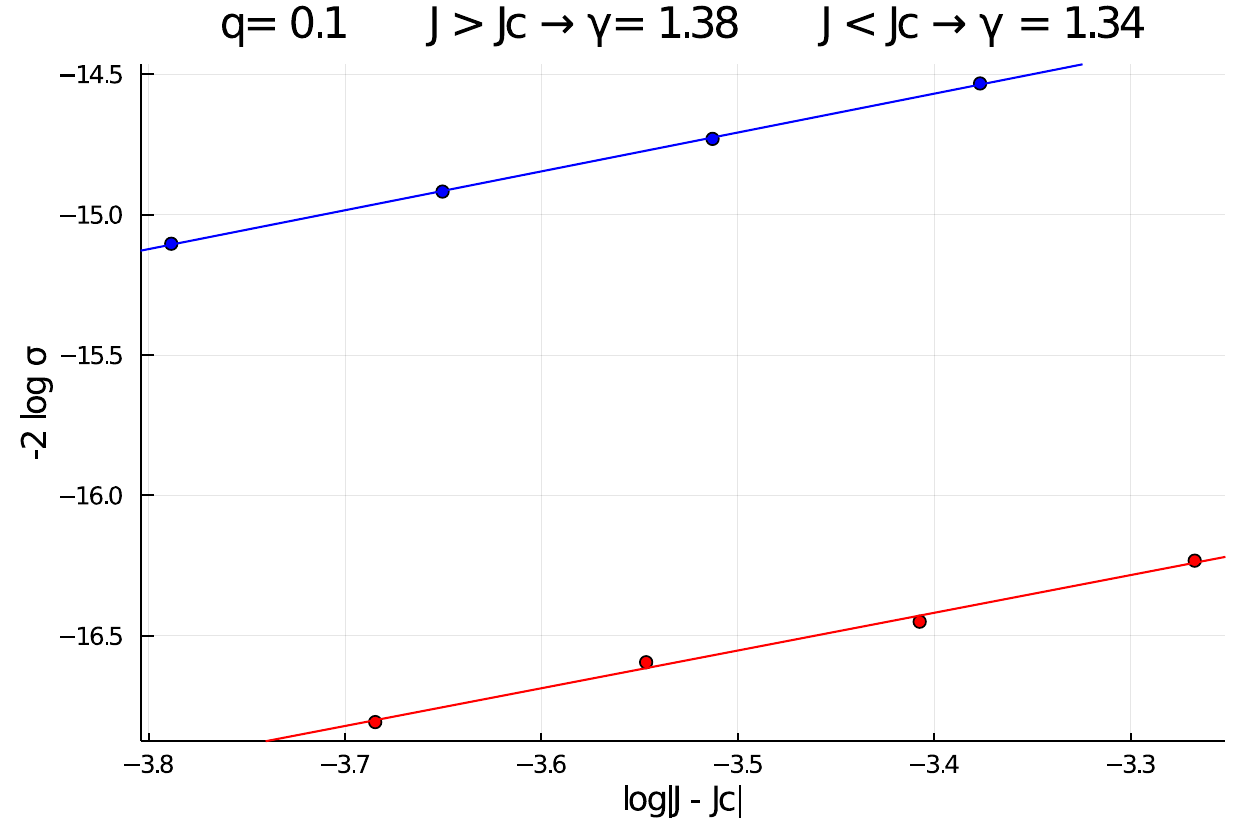}
                    \hspace{0.1\textwidth}
    \includegraphics[width=0.4\textwidth]{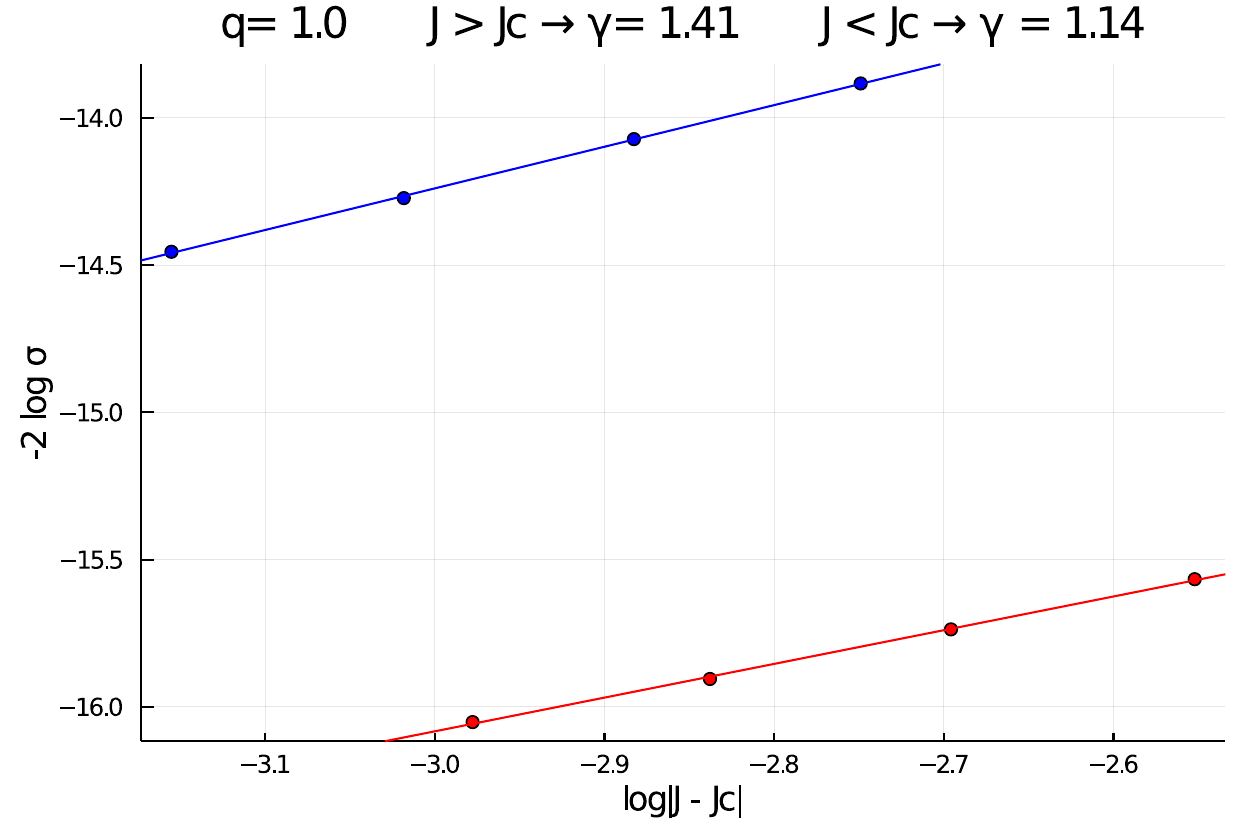}
    \caption{The regression lines for $\gamma$ for some values of
            $q$. In each chart the red line is obtained looking at the
            values $J < J_c$ (high temperature) whreas the blue line
            is obtained looking at the values $J > J_c$ (low temperature).}
    \label{fig:slope_critical_exponents}
\end{figure}


\begin{figure}
    \centering
    \includegraphics[width=0.4\textwidth]
                    {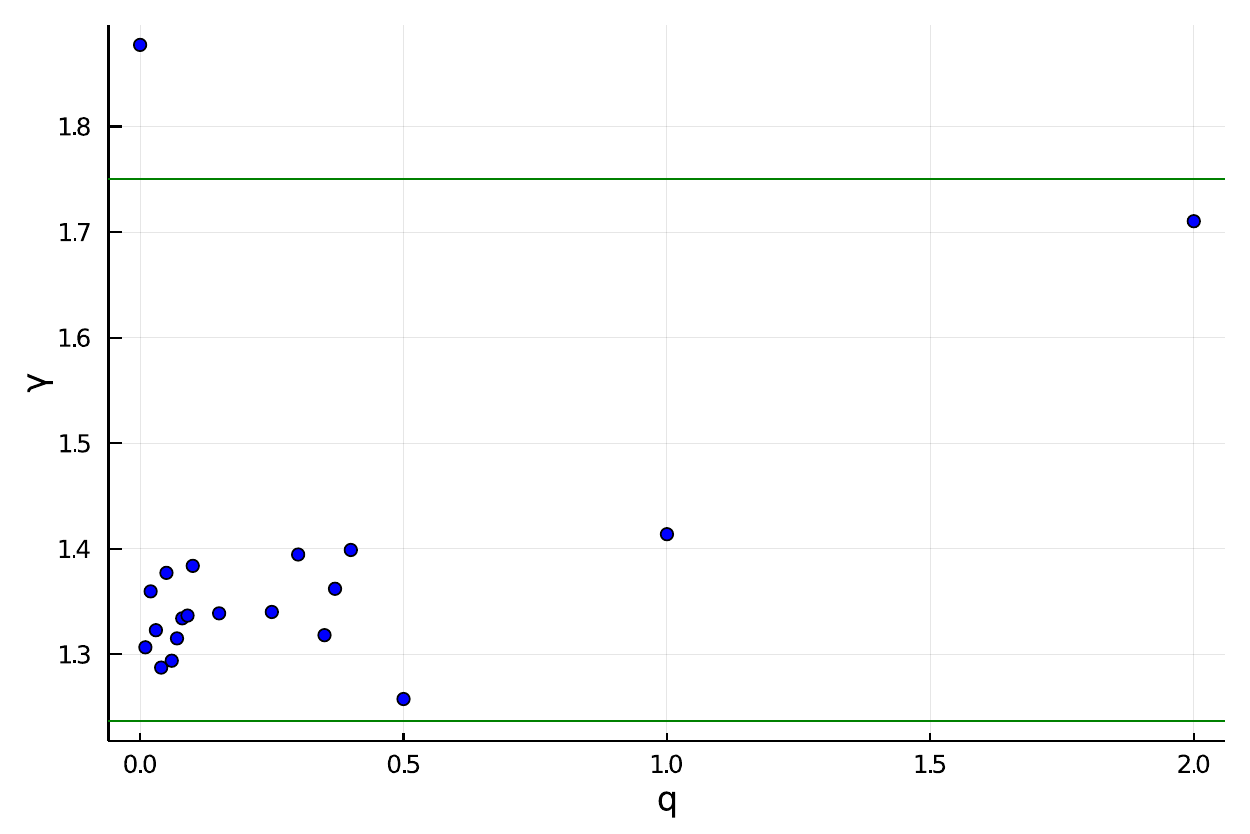}
                    \hspace{0.1\textwidth}
    \includegraphics[width=0.4\textwidth]
                    {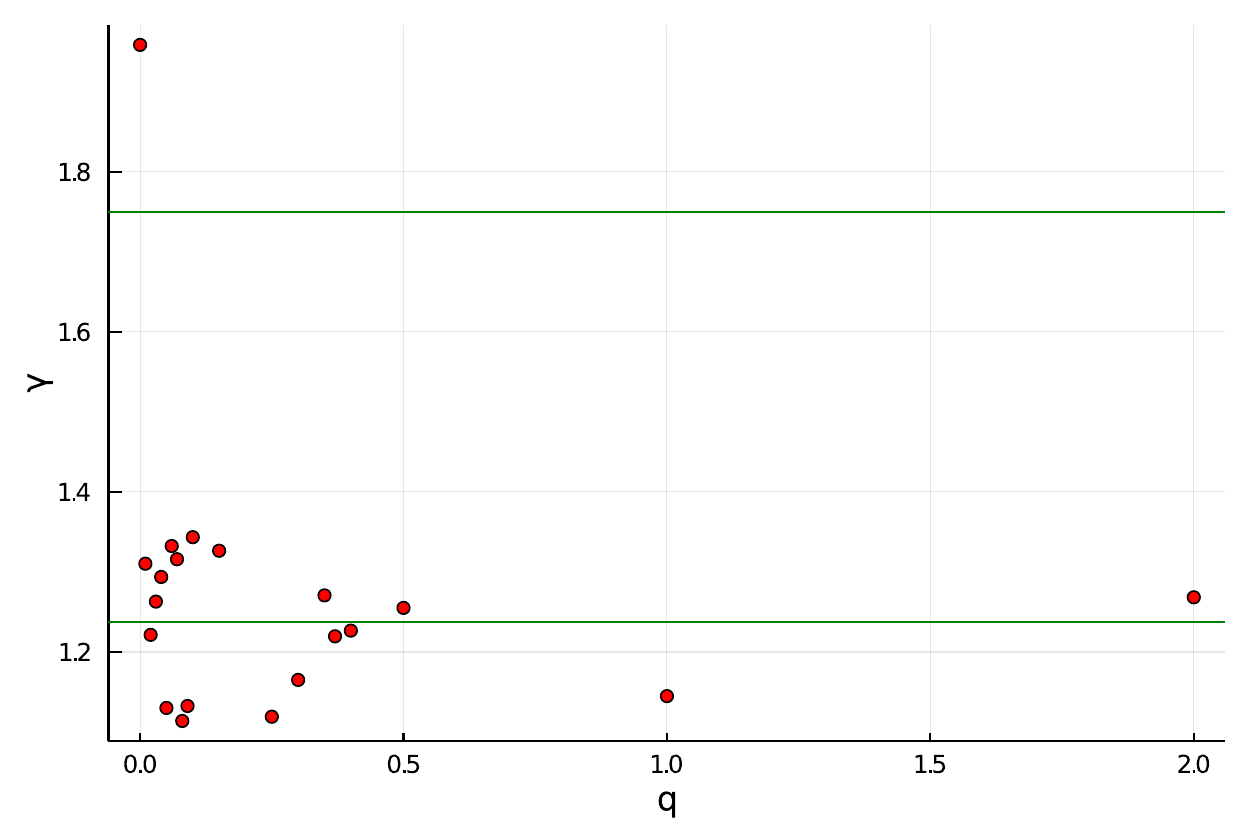}
    \caption{The low temperature ($J > J_c$, in blue) and 
            high temperature ($J < J_c$, in red) coefficients
            for values of $q$ between $0$ and $2.0$. The
            solid lines correspond to the ``true'' values of
            the critical exponents for the two and three dimensional systems. The numerically determined values appear to be
            close to the critical value of the three dimensional system for $q>0$ and to the critical value of the two dimensional system for $q=0$. The behavior of numerical estimate for the
            low temperature critical exponent in the case $q=2$ is
            likely to be due to the limited size of the simulated system.
    }
    \label{fig:critical_coefficients}
\end{figure}


\subsection{Numerical details and heuristic discussion}\label{sec:numerical_details}

We simulated the shaken dynamics, 
without external magnetic field, on a 
$96 \times 96 \times 96$ grid on which we
imposed periodic boundary conditions.

We considered a grid of points in the $(q,J)$
and for each point in the grid, we started the simulation from the configuration with all spin set to $-1$ and let the system perform 510000 steps of the shaken dynamics (that is, 1020000 half steps). We considered the first 10000 steps as a ``transient'' and collected statistics on the final 500000 steps. In particular, for each pair of parameters $(q,J)$ we computed the the average and variance (over time) of the magnetization.

Simulations have been carried over using the language ``julia''.

A heuristic insight on why this procedure should
be useful it is possible to argue as follows.
Letting the dynamics start from the configuration with all spins taking value $-1$, it is expected to reach very rapidly a local minimizer 
of the free energy  and start visiting configurations that are close to this minimizer. 

In the high temperature regime, the minimizers of the free energy are expected to have all zero mean magnetization and if the parameters $(q,J)$ are in the high temperature region, the dynamics is likely to return very quickly to a state where the number of plus and minus spins is essentially the same. As a consequence, it is possible to conjecture that the average (over time) of the magnetization is very close to zero and that its variance is very small (see figure \ref{fig:critical_curve}).

In the low temperature region, the free energy has minimizers whose mean magnetization is closer (and closer as the system freezes) to $\pm 1$. The colder the system, the higher the free energy barriers separating the ``positive magnetization'' minimizers from the ``negative magnetization'' ones. As the chain evolves, the dynamics will overcome a free energy barrier of magnitude $\Delta$ with a probability that is exponentially small in $\Delta$. 
Therefore, since the system starts from the configuration where all spins are $-1$, it will very likely reach the vicinity of one of these $-1$ minimizer and will stay, with very high probability, in the region where the minimizers of the free energy have negative mean magnetization. The typical time to observe a transition to the $+1$ minimizers are exponentially large in the volume, and hence they are way beyond the possibility of a numerical simulation. With probability very close to $1$ the system will remain captured by the $-1$ minimizers. Consequently, also in the low temperature region we can expect a very small variance for the average magnetization whereas its mean is likely to be more and more negative as the system becomes colder (see figure \ref{fig:critical_curve}).

Around the critical temperature, the free energy has minimizers with both positive and negative mean magnetization. However, the ``valleys'' of the free energy landscape where these minimizers sit are rather shallow and, therefore, the dynamics is expected to move between minimizers whose mean magnetization has opposite signs. An evolution of this type will produce an average magnetization that is close to zero. 
Nevertheless, the variance of the magnetization is expected, in this case, to increase when the temperature approaches its critical value. Note that the general theory of critical phenomena (see again \cite{macritical}) shows that it is rather delicate to measure the features of the systems close to the critical temperature. The relatively good results we obtained with the simulations presented above show that the shaken (or alternate) dynamics is able to capture the features of the system also when the parameters are close-to-critical.

\section*{Acknowledgements}
BS acknowledges the support of the Italian MIUR Department of Excellence grant (CUP E83C18000100006).
AT acknowledges the support of the H2020 Project Stable and Chaotic Motions in the Planetary Problem
(Grant 677793 StableChaoticPlanetM of the European Research Council).


\addcontentsline{toc}{chapter}{\bibname}

\providecommand{\bysame}{\leavevmode\hbox to3em{\hrulefill}\thinspace}
\providecommand{\MR}{\relax\ifhmode\unskip\space\fi MR }
\providecommand{\MRhref}[2]{%
  \href{http://www.ams.org/mathscinet-getitem?mr=#1}{#2}
}
\providecommand{\href}[2]{#2}


\end{document}